\documentclass[journal,twoside,web]{ieeecolor}
\pdfoutput=1
\usepackage{generic}
\usepackage{cite}
\usepackage{amsmath,amssymb,amsfonts}
\usepackage{algorithm}
\usepackage[justification=centering]{caption}
\usepackage[noend]{algpseudocode}
\usepackage{algorithmicx}
\usepackage{graphicx}
\usepackage{textcomp}
\usepackage{bm}
\usepackage{color}
\usepackage{mathrsfs}

\newtheorem{remark}{Remark}[section]
\newtheorem{theorem}{Theorem}[section]
\newtheorem{assumption}{Assumption}[section]
\newtheorem{lemma}{Lemma}[section]

\newtheorem{corollary}{Corollary}[section]
\newtheorem{example}{Example}[section]

\allowdisplaybreaks

\def\ban{\begin{eqnarray*}}
	\def\ean{\end{eqnarray*}}
\def\bna{\begin{eqnarray}}
	\def\ena{\end{eqnarray}}
\def\BibTeX{{\rm B\kern-.05em{\sc i\kern-.025em b}\kern-.08em
    T\kern-.1667em\lower.7ex\hbox{E}\kern-.125emX}}
\begin{document}
\title{Distributed adaptive estimation for stochastic large regression models}
\author{Die Gan,  Siyu Xie, Zhixin Liu, \IEEEmembership{Member, IEEE}, and Xuebo Zhang \IEEEmembership{Senior Member, IEEE}
\thanks{Corresponding author: Zhixin Liu.}
\thanks{Die Gan and Xuebo Zhang are with the College of Artificial Intelligence, Nankai University, Tianjin 300350, China, (e-mails: gandie@nankai.edu.cn; zhangxuebo@nankai.edu.cn).
}
\thanks{Siyu Xie is with the School of Aeronautics and Astronautics, University of Electronic Science and Technology of China, Chengdu 611731, China (e-mail: syxie@uestc.edu.cn).}
\thanks{Zhixin Liu is with State Key Laboratory of Mathematical Sciences, Academy of Mathematics and Systems Science, Chinese Academy of Sciences, Beijing 100190, China, (e-mail: lzx@amss.ac.cn)}
}
\maketitle

\begin{abstract}
	This paper studies the distributed adaptive estimation  problems for stochastic large regression models with an infinite number of  parameters. 
	By constructing a recursive local cost function,
	we propose a novel distributed recursive least squares algorithm  to estimate the unknown system parameters, where the growth rate of regressors' dimension is characterized by a non-decreasing positive function. The 
	almost sure convergence   of the proposed algorithm is established
	under  a cooperative  excitation condition, which incorporates the temporal information and the spatial information to reflect the cooperative effect among multiple agents.
	Moreover, we analyze the
	prediction error by establishing  the asymptotic upper bound of the  
	accumulated regret without any excitation conditions.  
	The main difficulty of theoretical analysis lies in how to analyze properties of the product of non-independent and non-stationary random matrices, whose dimensions  change over time simultaneously.
	Some techniques, such as stochastic Lyapunov function, double-array martingale theory and algebraic graph theory,  are employed to deal with the above issue.
	Our theoretical results are derived without imposing independence or stationarity assumptions on the regression vectors, thereby not excluding the correlated feedback signals.

\end{abstract}

\begin{IEEEkeywords}
	Distributed adaptive estimation,  recursive least squares, accumulated regret, large model, infinite parameters
\end{IEEEkeywords}

\section{Introduction}\label{sec:introduction}
\IEEEPARstart{W}{ith} the rapid growth of computational resources and data availability, large models   have attracted significant attention  due to their remarkable ability to approximate intricate behaviors across diverse domains\cite{wei2022emergent}. 
These models are typically characterized by a vast or even infinite number of system parameters. Modern deep neural networks, such as large language models, often contain millions to trillions of trainable parameters.
Accurately estimating  these parameters is important  for  model interpretability and  reliability. Although such models 
demonstrate exceptional learning capabilities and high prediction accuracy  consistent with system identification principles \cite{deep2025}, their enormous scale and even infinite-dimensional nature bring significant challenges for parameter estimation.

Several studies have addressed parameter estimation for large models with infinite-dimensional parameter spaces. For example, \cite{Sudipta2025} investigated the identification of transfer function coefficients in exponentially stable single-input single-output infinite-dimensional systems, while \cite{Yin2022} proposed an infinite-dimensional sparse learning algorithm using atomic norm regularization for linear system identification. More closely related to the problem discussed in this paper is the work of \cite{Guo1991}, which developed an estimation theory for autoregressive models with exogenous inputs of infinite orders (ARX($\infty$)). Based on  this,
Dai and Guo in \cite{DAI2025} recently established strong consistency of parameter estimates and prediction performance for large  models with saturated output observations.
However,  all the studies mentioned above are limited to a centralized or single-agent estimation setting.

As model sizes expand, centralized  learning or estimation approaches become increasingly impractical due to computational constraints. Furthermore, the storage requirements for large historical datasets make online learning particularly important in large-scale learning tasks. Consequently,
distributed adaptive estimation  has emerged as a promising alternative, where  multiple nodes collaboratively estimate system parameters through localized computations and neighbor-to-neighbor communication. It offers several advantages over centralized methods, such as superior flexibility, stronger fault tolerance for node failures, and  lower communication and computation load. These benefits  promote 
the development of distributed adaptive estimation algorithms  constructed using various algorithmic strategies such as incremental, diffusion, and consensus-based approaches  \cite{7858743,YANG2022105265,Ganauto,Sayed2011Incre,Gan-tnnls,distributedsg2,Schizas2009,Zhangling2017,Mateos2012,LEE2023105505,Nautiyal2022,Philippenko2024}.

A considerable amount of  research has been devoted to investigating the theoretical performance of distributed  estimation algorithms under  certain data conditions. 
Some studies focus on the performance analysis of distributed parameter estimation algorithms using deterministic or time-invariant regression data \cite{YANG2022105265,LEE2023105505}, while some  others focus on the random regression vector case. 
For example,
Barani et al. \cite{distributedsg2} established the convergence of a distributed stochastic gradient descent algorithm employing independent and identically distributed (i.i.d.) signals. Schizas et al. \cite{Schizas2009} provided a stability analysis for a distributed least mean squares (LMS)-type adaptive algorithm, assuming strictly stationary and ergodic regression vectors. Under the assumption of independent regression vectors with positive-definite covariance matrices, Huang et al. \cite{Huang2024} analyzed the mean stability along with the transient and steady-state mean-square performance of Bayesian-learning-based distributed LMS algorithms. Cai et al. \cite{CAI2022} investigated the performance of the $q$-gradient diffusion LMS algorithm using zero-mean, spatially and temporally independent regression vectors. Azarnia \cite{Azarnia2023} studied the steady-state behavior of a deficient-length diffusion LMS algorithm, modeling the regressor components as uncorrelated zero-mean Gaussian processes. Nautiyal et al. \cite{Nautiyal2022} analyzed sparsity-aware diffusion adaptive filters for mean stability, assuming the input signal vectors were spatially and temporally i.i.d. and generated from a zero-mean Gaussian process. Similarly, Lee et al. \cite{math2023} studied the sparse diffusion LMS algorithm with hard thresholding under independent input conditions.
Since recursive least squares (RLS) algorithms often yield more accurate transient estimates and exhibit faster convergence compared to LMS algorithms, distributed RLS frameworks were developed from the perspective of adaptive networks \cite{Cattivelli2008}. Zhang et al. \cite{Zhangling2017} subsequently analyzed the mean-square performance of a diffusion forgetting factor least-squares algorithm, assuming independent input signals. Further, Yu et al. \cite{Yuyi2019} analyzed the mean-square performance of a diffusion RLS algorithm using zero-mean and spatially independent regression vectors. Alireza et al. \cite{Alireza2021} analyzed the mean-square convergence and stability properties of a diffusion RLS algorithm under the assumption of independent input regression vectors possessing a time-invariant covariance matrix.

We note that most existing theoretical results mentioned in the above literature were established by requiring the independence, stationarity or ergodicity conditions  of regression vectors. 
However, for the typical models
such as autoregressive  with exogenous
input (ARX) model and Hammerstein system,  the regressors are inherently constructed from past noisy input and output signals. This internal feedback loop inevitably induces  temporal correlation, potential non-stationarity, and a breakdown of classical independence assumptions in the regression vectors.

To overcome this difficulty, 
recent efforts have focused on relaxing the statistical assumptions required for convergence analysis
\cite{XIE2018,Gan-2024,Gan-tnnls, Wang2021}. 
A particularly noteworthy contribution comes from Xie et al. \cite{Xie2021}, who successfully established the convergence guarantees for a distributed RLS algorithm under a cooperative non-persistent  excitation (non-PE) condition.
Nevertheless, there exists a fundamental limitation  in current theoretical developments of distributed estimation: nearly all
existing theoretical frameworks for distributed adaptive estimation including \cite{7858743,YANG2022105265,Ganauto,Sayed2011Incre,Gan-tnnls,distributedsg2,Schizas2009,Mateos2012,LEE2023105505,XIE2018,Philippenko2024,Gan-2024,Xie2021,Huang2024,CAI2022,Azarnia2023,Nautiyal2022,math2023,Cattivelli2008,Zhangling2017,Yuyi2019, Wang2021, Alireza2021} are inherently designed for and constrained to finite-order or finite-dimensional system models.
Such frameworks are ill-suited for addressing the  challenge presented by large models with infinite parameters.
The unbounded growth in parameter dimensionality requires the further development  of existing mathematical tools to  characterize the convergence behavior of distributed parameter estimates.

This paper investigates  convergence properties of distributed adaptive estimation and prediction  for stochastic  large regression models over time-varying directed multi-agent networks. Each agent aims to collaboratively estimate an infinite-dimensional unknown parameter matrix by sharing and integrating information from neighboring agents.
The primary theoretical challenge lies in analyzing the  properties of the product of asymmetric, non-independent  and non-fixed dimensional random matrices over time-dependent digraphs. We employ double-array martingale theory, stochastic Lyapunov functions and algebraic graph theory to deal with the above issue. 
The main contributions of this work are summarized as follows:
\begin{itemize}
	\item We propose a novel distributed recursive least squares algorithm  capable of handling  stochastic regression models with infinite parameters over time-varying directed graphs. 
	Unlike the traditional diffusion RLS algorithms, our algorithm adapts to regressors whose dimension grows using a non-decreasing sequence, addressing large models like  ARX($\infty$) system. Our algorithm is different from previous distributed estimation algorithms with finite unknown parameters \cite{7858743,YANG2022105265,Ganauto,Sayed2011Incre,Gan-tnnls,distributedsg2,Schizas2009,Mateos2012,LEE2023105505,XIE2018,Gan-2024,Xie2021,Huang2024,CAI2022,Azarnia2023,Nautiyal2022,math2023,Cattivelli2008,Philippenko2024,Zhangling2017,Yuyi2019,Alireza2021}.   
	
	\item We establish almost sure convergence of the proposed algorithm under a quite weak cooperative excitation condition, without requiring independence or stationarity assumptions on regressors commonly used in the existing literature.
	This condition jointly incorporates temporal and spatial network information to capture the cooperative effect of multiple agents compared to \cite{Guo1991}, enabling the  entire multi-agent  system to achieve global estimation, even if any individual agent cannot due to a lack of necessary information. 
	
	\item 
	We derive the  regret analysis for distributed large-model prediction, proving that the averaged regret converges to zero  without resorting to any excitation conditions. This theoretical guarantee confirms the strong predictive performance of the distributed adaptive predictor.  
	The above convergence and prediction results can be reduced to the distributed estimation results for finite-dimensional stochastic regression models in \cite{Xie2021} when the growth rate of regressor's dimension is taken as a  fixed upper bound.

\end{itemize}

The remainder of the paper is organized as follows: Section \ref{sec:formulation} formulates the problem, introduces the system model and communication topology. Section \ref{sec:algorithm} details the distributed recursive least squares algorithm with increasing dimensions. Section \ref{convergence} presents the  assumptions and Key lemmas. Section \ref{main_results} contains the parameter convergence and regret analysis results.  Concluding remarks are given in Section \ref{section_conclusion}.

\section{Problem Formulation}\label{sec:formulation}

\subsection{Matrix theory}
This paper employs  $\mathbb{R}^m$ to denote  the space of $m$-dimensional real vectors, and  $\mathbb{R}^{m\times n}$ to represent real matrices with $m$ rows and $n$ columns. We use $ I_m$ and $\bm 0_m$  to denote the $m$-dimensional identity matrix and $m$-dimensional zero matrix, respectively. For any matrix $ A \in \mathbb{R}^{m \times n}$, its Euclidean norm is defined as $\| A\| \triangleq \left( \lambda_{\max}( A  A^{\top}) \right)^{1/2}$, where the superscript $\top$ denotes matrix transposition and $\lambda_{\max}(\cdot)$ denotes the largest eigenvalue. Correspondingly, $\lambda_{\min}(\cdot)$ represents the smallest eigenvalue and $\operatorname{tr}(\cdot)$ denotes the matrix trace. The operator $\text{col}(\cdot,\dots,\cdot)$ constructs a column vector by vertical stacking of specified elements, while $\text{diag}(\cdot,\dots,\cdot)$ forms a block-diagonal matrix from given vectors or matrices.
The Kronecker product between matrices $ A$ and $ B$ is denoted by $ A \otimes  B$. For symmetric matrices $ X,  Y \in \mathbb{R}^{n \times n}$, the expressions $ X \geq  Y$, $ X >  Y$, $ X \leq  Y$, and $ X <  Y$ indicate that $ X-Y$ is positive semi-definite, positive definite, negative semi-definite, or negative definite, respectively. Regarding matrix sequences $\{ A_t\}$ and positive scalar sequences $\{a_t\}$, the notation $ A_t = O(a_t)$ denotes the existence of a positive constant $C$ independent of $t$ and $a_t$ such that $\|  A_t \| \leq C a_t$ for all $t \geq 0$.
Similarly, $A_k = o(b_k)$ means that $\lim_{k \to \infty} \|A_k\| / b_k = 0$.
\begin{lemma}[Matrix inversion formula \cite{Zielke1968}] \label{wl1}
	For matrices $ A$, $ B$,  $ C$ and  $ D$ of compatible dimensions, the following formula
	$$( A+ B D C)^{-1}= A^{-1}- A^{-1} B( D^{-1}+ C A^{-1} B)^{-1} C  A^{-1}	$$
	holds, provided that the relevant matrices are invertible.
\end{lemma}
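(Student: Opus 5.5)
The plan is a direct verification: I will show that the proposed right-hand side, multiplied by $A+BDC$, gives the identity. Let $A\in\mathbb{R}^{n\times n}$, $B\in\mathbb{R}^{n\times m}$, $D\in\mathbb{R}^{m\times m}$, $C\in\mathbb{R}^{m\times n}$. Assume $A$, $D$ and $S\triangleq D^{-1}+CA^{-1}B$ are invertible. Define
\[
X\triangleq A^{-1}-A^{-1}BS^{-1}CA^{-1}.
\]
Since all matrices are square of the same size $n$, a one-sided inverse is automatically two-sided. It is therefore enough to prove $(A+BDC)X=I_n$; this also shows that $A+BDC$ is invertible.

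First, expand the product:
\[
(A+BDC)X = I_n + BDCA^{-1} - BS^{-1}CA^{-1} - BDCA^{-1}BS^{-1}CA^{-1}.
\]
Next, group the last two terms, both of which have the form $B(\cdot)S^{-1}CA^{-1}$:
\[
BS^{-1}CA^{-1} + BDCA^{-1}BS^{-1}CA^{-1} = BD\left(D^{-1}+CA^{-1}B\right)S^{-1}CA^{-1} = BDCA^{-1}.
\]
This is the key step. It inserts $DD^{-1}=I_m$ in front of $S^{-1}$ in the first term and then uses $SS^{-1}=I_m$. Substituting back, the two $BDCA^{-1}$ terms cancel, so $(A+BDC)X=I_n$ and hence $X=(A+BDC)^{-1}$.

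The only point requiring care is the hypothesis "the relevant matrices are invertible". I take it to mean that $A$, $D$ and $D^{-1}+CA^{-1}B$ are invertible, since these are exactly the inverses that appear in the formula. Under these assumptions the invertibility of $A+BDC$ is a consequence of the computation rather than an extra assumption. Apart from this, the argument is routine algebra. As a sanity check, I could also verify $X(A+BDC)=I_n$ by the symmetric computation, grouping $A^{-1}BS^{-1}(\cdot)DC$, but for square matrices this is not needed.
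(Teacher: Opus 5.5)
Your proof is correct. Expanding $(A+BDC)X$ and using
\[
BS^{-1}CA^{-1}+BDCA^{-1}BS^{-1}CA^{-1}=BD\,(D^{-1}+CA^{-1}B)\,S^{-1}CA^{-1}=BDCA^{-1}
\]
gives the identity. Because $A+BDC$ and $X$ are both $n\times n$, this right inverse is two-sided, so the invertibility of $A+BDC$ does follow from the computation rather than being an extra assumption.

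The paper does not prove this lemma; it only cites Zielke (1968), so there is no competing route to compare with. Your direct verification is the standard argument.

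Your reading of ``the relevant matrices are invertible'' as $A$, $D$ and $D^{-1}+CA^{-1}B$ is exactly what the paper needs in its only use of the lemma, the rank-one update (12). There $A=P^{-1}_{k,i}(t)>0$, $B=C^{\top}=\varphi_{k,i}(t)$, $D=1$, and $D^{-1}+CA^{-1}B=1+\varphi^{\top}_{k,i}(t)P_{k,i}(t)\varphi_{k,i}(t)\geq 1$.
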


\subsection{Graph theory}
A weighted digraph $\mathcal{G}=(\mathcal{I},\mathcal{E},\mathcal{A})$ is a triplet, where $\mathcal{I} = \{1,\dots,n\}$ denotes the vertex set corresponding to agents, $\mathcal{E} \subseteq \mathcal{I} \times \mathcal{I}$ represents the edge set with $(i,j) \in \mathcal{E}$ indicating that agent $i$ can receive information from agent $j$, and $\mathcal{A} \in \mathbb{R}^{n \times n}$ is the weighted adjacency matrix satisfying $a_{ij} > 0$ if and only if $(i,j) \in \mathcal{E}$. 
A path in $\mathcal{G}$ is an ordered sequence of vertices where each consecutive pair is connected by an edge, and the digraph is said to be strongly connected if there exists a path between every pair of distinct vertices. 
For a given positive integer $k$, the union of $k$ digraphs $\{\mathcal{G}_j=(\mathcal{I}, \mathcal{E}_j,\mathcal{A}_j), 1\leq j\leq k\}$ with the same vertex set is denoted by $\cup^{k}_{j=1}\mathcal{G}_j=(\mathcal{V}, \cup^k_{j=1}\mathcal{E}_j,\frac{1}{k}\sum^k_{j=1}\mathcal{A}_j)$.
A sequence of digraphs $\left\{\mathcal{G}_{k}=(\mathcal{I},\mathcal{E}_{k},\mathcal{A}_{k})\right\}_{k\geq 1}$ is $\delta$-nondegenerate for some $\delta>0$, if the weights are uniformly bounded away from zero by $\delta$ whenever positive, i.e., for each $k\geq 1$, $a^{(k)}_{ij}\triangleq\left(\mathcal{A}_{k}\right)_{ij}>\delta$ whenever $a^{(k)}_{ij}>0$.
A sequence $\left\{\mathcal{G}_{k}\right\}_{k\geq 1}$ is $L$-jointly connected  for some positive integer $L$, if for every integer $t \geq 1$, the union digraph $\mathcal{G}_{tL} \cup \cdots \cup \mathcal{G}_{(t+1)L-1}$ is strongly connected.
The weighted out-degree and in-degree of vertex $i \in \mathcal{I}$ are respectively defined as $d_{\text{out}}(i) = \sum_{j=1}^n a_{ij}$ and $d_{\text{in}}(i) = \sum_{j=1}^n a_{ji}$. The digraph is called weight-balanced when $d_{\text{out}}(i) = d_{\text{in}}(i)=1$ holds for all $i \in \mathcal{I}$.

\subsection{Observation model}
This paper considers a  multi-agent network consisting of $n$ agents.  The observation model of each agent $i$ at time $k$ is described by the following  type of stochastic large regression model:
\begin{align}
	y^{\top}_{k+1,i} = \varphi^{\top}_{k,i}{\Theta}+w^{\top}_{k+1,i},~k\geq0,
	\label{model}  
\end{align}
where ${y}_{k,i}\in\mathbb{R}^{m}$ is the observation vector of agent $i$ at time $k$, ${\varphi}_{k,i}\in\mathbb{R}^{\infty}$ is the infinite-dimensional stochastic regression vector of agent $i$ at time $k$ with countably infinite components, ${w}_{k+1,i}\in\mathbb{R}^{m}$ is the noise process, ${\Theta}\in\mathbb{R}^{\infty\times m}$  is the unknown parameter matrix
to be estimated and satisfy the following summability condition:
\begin{gather}
	\sum^{\infty}_{q=1}\|\Theta^{[q]}\|<\infty,\label{con}
\end{gather}
where $\Theta^{[q]}\in\mathbb{R}^{m}$ is defined as the $q$-th row of the matrix $\Theta$.
Note that model (\ref{model})  naturally include the finite-dimensional case, since condition (\ref{con}) is automatically satisfied when  $\Theta^{[q]}$
vanishes for large $q$.
We assume that the network communication topology of $n$ agents is time-dependent and described by a
sequence of  digraphs $\left\{\mathcal{G}_{k}=(\mathcal{I},\mathcal{E}_{k},\mathcal{A}_{k})\right\}_{k\geq 1}$.
The neighbor set of agent $i$ at time $k$ is denoted as $\mathcal{N}_{i,k}=\{j\in \mathcal{I}, (i,j)\in\mathcal{E}_k \}$, and the agent $i$ is also included in this set.

Model (\ref{model}) incorporates a broad range of practical
scenarios, including   models with varying or expanding parameters
dimensions. Here are two  typical examples that can be  incorporated into model (\ref{model}).
\begin{example} \label{ex1}
	Consider the well-known ARX model with infinite orders:
	\begin{align}
		y_{k+1,i}&=\sum^{\infty}_{q=1}(A_q y_{k+1-q,i}+B_qu_{k+1-q,i})+w_{k+1,i}, ~~k\geq0;\nonumber\\
		y_{k,i}&=w_{k,i}=0, u_{k,i}=0,~ \forall k<0,\label{arx}
	\end{align}
	where   ${u}_{k,i}\in\mathbb{R}^{l}$ denote the $l$-dimensional  input of agent $i$ at time instant $k$,
	$A_{k}\in\mathbb{R}^{m\times m}, B_{k}\in\mathbb{R}^{m\times l} (k=1, 2,3\dots )$ are the unknown matrices  and satisfy 
	$
	\sum^{\infty}_{k=1}(\|A_k \|+\|B_k \|)<\infty.
	$
	Set $\Theta^{\top}=[A_1,B_1,A_2,B_2,\cdots]\in\mathbb{R}^{m\times \infty}$ and $\varphi_{k,i}=(y^\top_{k,i},u^\top_{k,i},y^\top_{k-1,i},u^\top_{k-1,i},\cdots)^\top\in\mathbb{R}^\infty$. Then 
	the above  ARX($\infty$) model (\ref{arx}) can be transformed into the stochastic large  model (\ref{model}).
\end{example}

\begin{example} (See Example 2 in \cite{DAI2025}) 
	We consider  a  neural network model with a single hidden layer for agent $i$: $$y_{k+1,i}=\sum^{\infty}_{q=1}B_qu_{k,i,q}+w_{k+1,i},~~ k\geq 0,$$ where the hidden layer outputs  $u_{k,i,q} = \sigma(A_q^\top \phi_{k,i} + C_q)$ with $\sigma(\cdot)$ being the activation function and $\phi_{k,i}$ denotes the input feature vector at time step $k$ of agent $i$. In practical adaptive control applications, the parameter estimation process typically involves an initial offline phase where reinforcement learning techniques provide preliminary approximations for parameter set $\{A_q, B_q, C_q\}_{q \geq 1}$, followed by an online  phase where $\{A_q, C_q\}_{q \geq 1}$ remain fixed to preserve computational resources while only the output weights $\{B_q\}_{q \geq 1}$ receive continuous updates. This strategy ensures that the  signals $u_{k,i,q}$, which depend exclusively on the fixed parameters estimates for $\{A_q, C_q\}_{q \geq 1}$ and the current input $\phi_{k,i}$, can be treated as known quantities for online estimation and prediction.
\end{example}

The goal of this paper is to design a distributed adaptive identification algorithm that enables all agents to cooperatively estimate the unknown infinite parameters $\Theta\in\mathbb{R}^{\infty\times m}$ by  the local  process $\{y_{k,j},\varphi_{k,j},k\geq 1\}_{j\in\mathcal{N}_{i,k}}$ over time-varying communication topology.

\section{Distributed estimation algorithm}\label{sec:algorithm}
Let $\{p_t\}$ be any non-decreasing sequence of positive integers and $p_t=O(t)$. 
Define the $p_t$-dimensional regressor vector as
\begin{align}
	\varphi_{k,i}(t)=\text{col}\left\{\varphi_{k,i}^{[1]},\varphi_{k,i}^{[2]}\cdots, \varphi_{k,i}^{[p_t]}\right\}\in\mathbb{R}^{p_t},\label{varphi}
\end{align}
where $\varphi_{k,i}^{[q]}~ (q=1,2,...,p_t)$ represents the $q$-th element of the vector $\varphi_{k,i}$.

In order to construct the distributed estimation algorithm,  
We first introduce the following  recursive estimation error function  of each sensor $i$ with the parameter $\Gamma\in\mathbb{R}^{p_t\times m}$:
\begin{align}
	\!\!\!\!\sigma_{k+1,i}(\Gamma)\!=\!\!\!\sum_{j\in\mathcal{N}_{i,k}}a^{(k)}_{ij}\left(\sigma_{k,j}(\Gamma)\!+\!\|y_{k+1,j}\!-\!\Gamma^{\top}\varphi_{k,j}(t)\|^2\right)\label{sigma1}
\end{align}
where  
$ \sigma_{0,i}(\Gamma)=\frac{1}{\beta}\operatorname{tr}\left(\Gamma^{\top}\Gamma\right)$ with $\beta>0$. Note that the dimension of  $\varphi_{k,j}(t)$  defined in (\ref{varphi}) is increasing with $p_t$.
By minimizing the above local cost function  $\sigma_{k+1,i}(\Gamma)$,  the estimate of each sensor $i$ can be obtained as 
\begin{align}
	\Theta_{k+1,i}(t)\triangleq \text{argmin}_{\Gamma\in\mathbb{R}^{p_t\times m}}\sigma_{k+1,i}(\Gamma).\label{re3}
\end{align}
We derive the recursive form of $\Theta_{k+1,i}(t)\in\mathbb{R}^{p_t\times m}$ below.

Let $\sigma_{k}(\Gamma)=\text{col}\{\sigma_{k,1}(\Gamma),\cdots,\sigma_{k,n}(\Gamma)\} $ and $
e_{k+1}(\Gamma)$$=\text{col}\left\{\|y_{k+1,1}\!-\!\Gamma^{\top}\varphi_{k,1}(t)\|^2,\cdots,\|y_{k+1,n}\!-\!\Gamma^{\top}\varphi_{k,n}(t)\|^2\right\}.$ We can recursively get from (\ref{sigma1}) that
\begin{align}
	&\sigma_{k+1}(\Gamma)=\mathcal{A}_k\sigma_{k}(\Gamma)+\mathcal{A}_ke_{k+1}(\Gamma)\nonumber\\
	&=	\mathcal{A}_k\mathcal{A}_{k-1}\sigma_{k-1}(\Gamma)
	+\mathcal{A}_k\mathcal{A}_{k-1}e_{k}(\Gamma)+\mathcal{A}_ke_{k+1}(\Gamma)\nonumber\\
	&=\cdots=\mathcal{A}_k\mathcal{A}_{k-1}\cdots\mathcal{A}_{0}\sigma_{0}(\Gamma)+
	\sum^k_{l=0}\mathcal{A}_k\cdots\mathcal{A}_{l}e_{l+1}(\Gamma),\label{re1}
\end{align}
where $\mathcal{A}_k$ is the weighted adjacency matrix at time $k$.

For convenience, denote $\mathcal{A}(k,l)=\mathcal{A}_k\cdots\mathcal{A}_{l}$ and $a^{(k,l)}_{ij}$ as the $i$-th row and the $j$-th column element of $\mathcal{A}(k,l)$.
Since the weighted adjacency matrix is stochastic,  we have from (\ref{re1})
\begin{align*}
	\sigma_{k+1,i}(\Gamma)=\frac{1}{\beta}\operatorname{tr}\left(\Gamma^{\top}\Gamma\right)\nonumber
	+\sum^n_{j=1}\sum^{k}_{l=0}a^{(k,l)}_{ij}
	\|y_{l+1,j}-{\Gamma}^{\top}\varphi_{l,j}(t)\|^2.
\end{align*}
Substituting the above equation into (\ref{re3}) and by the matrix derivative rules, we can obtain that
\begin{align}
	\Theta_{k+1,i}(t)= P_{k+1,i}(t)\left(\sum^n_{j= 1}\sum^k_{l= 0}a^{(k,l)}_{ij} \varphi_{l,j}(t)y^{\top}_{l+1,j}\right),\label{theta}
\end{align}
where
\begin{align}
	\!\!P_{k+1,i}(t)=\!\Bigg(\!\sum^n_{j=1}\sum^k_{l=0}a^{(k,l)}_{ij}\varphi_{l,j}(t)\varphi^{\top}_{l,j}(t)
	\!+\frac{1}{\beta}I_{p_t}\!\Bigg)^{\!-1}\!\!\label{p}.
\end{align}
Then it can be  equivalently written as the following recursive form,
\begin{align}
	P^{-1}_{k+1,i}(t)=\sum_{j\in\mathcal{N}_{i,k}}a^{(k)}_{ij}(P^{-1}_{k,j}(t)+\varphi_{k,j}(t)\varphi^{\top}_{k,j}(t))\label{P}
\end{align}
with $	P^{-1}_{0,i}(t)=\frac{1}{\beta}I_{p_t}$. Using a similar recursive relationship, equation (\ref{theta}) can be transformed into the following expression:
\begin{align}
	\Theta_{k+1,i}(t)=&P_{k+1,i}(t)\sum_{j\in\mathcal{N}_{i,k}}a^{(k)}_{ij}\Big(P^{-1}_{k,j}(t)\Theta_{k,j}(t)\nonumber\\
	&+\varphi_{k,j}(t)y
	^{\top}_{k+1,j}\Big). \label{211}
\end{align} 
To facilitate further derivation, denote 
\begin{align}
	\bar P^{-1}_{k+1,j}(t)\triangleq P^{-1}_{k,j}(t)+\varphi_{k,j}(t)\varphi^{\top}_{k,j}(t).\label{bar}
\end{align}
Applying the matrix inversion formula (i.e., Lemma \ref{wl1}), we derive that 
\begin{align}
	{\bar P}_{k+1,i}(t)&= P_{k,i}(t)- \frac{P_{k,i}(t)\varphi_{k,i}(t)\varphi^{\top}_{k,i}(t) P_{k,i}(t)}{1+\varphi^{\top}_{k,i}(t)P_{k,i}(t)\varphi_{k,i}(t)}\label{12}
\end{align} 
Then by (\ref{P}), it is clear that 
\begin{align}
	P^{-1}_{k+1,i}(t)=\sum_{j\in\mathcal{N}_{i,k}}a^{(k)}_{ij}\bar P^{-1}_{k,j}(t).
\end{align}
Thus by (\ref{12}), we have
\begin{align}
	{\bar\Theta}_{k+1,i}(t)&\triangleq {\bar P}_{k+1,i}(t)\left(P^{-1}_{k,j}(t)\Theta_{k,j}(t)
	+\varphi_{k,j}(t)y
	^{\top}_{k+1,j}\right)\nonumber\\
	&={\Theta}_{k,i}(t)+	\frac{P_{k,i}(t)\varphi_{k,i}(t)}{1+\varphi^{\top}_{k,i}(t)P_{k,i}(t)\varphi_{k,i}(t)}\nonumber\\
	&\hspace*{0.3in} \left(y^{\top}_{k+1,i}-\varphi^{\top}_{k,i}(t){\Theta}_{k,i}(t)\right).\label{adap}
\end{align}
By (\ref{211}), we have 
\begin{align}
	{\Theta}_{k+1,i}(t)= P_{k+1,i}(t)\sum_{j\in \mathcal{N}_{i,k}}a^{(k)}_{ij}{\bar P}^{-1}_{k+1,j}(t){\bar\Theta}_{k+1,j}(t).
\end{align}
Therefore, we get the following diffusion-type  recursive least  squares (RLS) algorithm with increasing dimensions, see  Algorithm  \ref{algorithm}.          
\begin{algorithm}[!htb]
	\caption{Distributed RLS  with increasing dimensions \label{algorithm}}
	\begin{algorithmic}[0]   
		\State $\mathbf{Initialization.}$ For each agent $i\in\{1,\cdots,n\}$ and any $t>0$,  define  $p_t$-dimensional regressors $\varphi_{k,i}(t)$ by (\ref{varphi}),
		and begin with arbitrary  initial matrix ${\Theta}_{0,i}(t)\in\mathbb{R}^{p_t\times m}$ and 
		initial matrix $P_{0,i}(t)=\beta I_{p_t}\in\mathbb{R}^{p_t\times p_t}$ with $\beta>0$ being choosing arbitrarily.
		\For{ \rm{each time} $k=0,1,\cdots,t-1$}
		\For{ \rm{each~ agent~} $i=1,\cdots,n$}
		\State $\mathbf{Step\ 1.}$ Generate ${\bar\Theta}_{k+1,i}(t)$ and ${\bar P}_{k+1,i}(t)$ based  on $y_{k+1,i},{\Theta}_{k,i}(t)$, $ P_{k,i}(t)$, $\varphi_{k,i}(t)$:
		\begin{align*}
			&b_{k,i}(t)=(1+\varphi^{\top}_{k,i}(t)P_{k,i}(t)\varphi_{k,i}(t))^{-1}\\
			&{\bar\Theta}_{k+1,i}(t)={\Theta}_{k,i}(t)+	b_{k,i}(t)P_{k,i}(t)\varphi_{k,i}(t)\nonumber\\
			&\hspace*{0.8in} (y^{\top}_{k+1,i}-\varphi^{\top}_{k,i}(t){\Theta}_{k,i}(t)),\\
			&{\bar P}_{k+1,i}(t)= P_{k,i}(t)- b_{k,i}(t)P_{k,i}(t)\varphi_{k,i}(t)\varphi^{\top}_{k,i}(t) P_{k,i}(t),
		\end{align*}
		
		\State $\mathbf{Step\ 2.}$ Generate $ P^{-1}_{k+1,i}(t)$ and ${\Theta}_{k+1,i}(t)$  by a convex combination of ${\bar\Theta}_{k+1,j}(t)$ and ${\bar P}^{-1}_{k+1,j}(t)$:
		\begin{align*}
			P^{-1}_{k+1,i}(t)&=\sum_{j\in\mathcal{N}_{i,k}}a^{(k)}_{ij}{\bar P}^{-1}_{k+1,j}(t),\\
			{\Theta}_{k+1,i}(t)&= P_{k+1,i}(t)\sum_{j\in \mathcal{N}_{i,k}}a^{(k)}_{ij}{\bar P}^{-1}_{k+1,j}(t){\bar\Theta}_{k+1,j}(t).
		\end{align*}
		\EndFor
		\EndFor
		\State $\mathbf{Output.}$ ${\Theta}_{t,i}(t).$
	\end{algorithmic}
\end{algorithm}

For convenience, we need to introduce the following 
notations:
\begin{align*}
	&{Y}_{k+1}^{\top}\triangleq{\rm col}\{{y}_{k+1,1}^{\top},\dots,{y}_{k+1,n}^{\top}\}, &(n\times m)\\
	&{\Phi}_{k}^{\top}(t)\triangleq{\rm diag}\{{\varphi}_{k,1}^{\top}(t),\dots,{\varphi}_{k,n}^{\top}(t)\}, &(n\times np_t)\\
	&{W}_{k+1}^{\top}\triangleq{\rm col}\{{w}_{k+1,1}^{\top},\dots,{w}_{k+1,n}^{\top}\}, &(n\times m)\\
	&\Theta(t)\triangleq{\rm col}\{\Theta^{[1]},\cdots,\Theta^{[p_t]}\}, &(p_t\times m)\\
	&{\Gamma}(t)\triangleq{\rm col}\{\underbrace{{\Theta(t)},\dots,{\Theta(t)}}_{n}\}, &(np_t\times m)\\
	&{\Theta}_{k}(t)\triangleq{\rm col}\{{\Theta}_{k,1}(t),\dots,{\Theta}_{k,n}(t)\}, &(np_t\times m)\\
	&\bar{{\Theta}}_{k}(t)\triangleq{\rm col}\{\bar{{\Theta}}_{k,1}(t),\dots,\bar{{\Theta}}_{k,n}(t)\}, &(np_t\times m)\\
	&\widetilde{{\Theta}}_{k}(t)\triangleq{\rm col}\{\widetilde{{\Theta}}_{k,1}(t),\dots,\widetilde{{\Theta}}_{k,n}(t)\}, &(np_t\times m)\\
	&~~~~~~~\text{where}~ \widetilde{{\Theta}}_{k,i}(t)={\Theta(t)}-{\Theta}_{k,i}(t),\\
	&\widetilde{\bar{{\Theta}}}_{k}(t)\triangleq{\rm col}\{\widetilde{\bar{{\Theta}}}_{k,1}(t),\dots,\widetilde{\bar{{\Theta}}}_{k,n}(t)\}, &(np_t\times m)\\
	&~~~~~~~\text{where}~ \widetilde{\bar{{\Theta}}}_{k,i}(t)={\Theta(t)}-\bar{{\Theta}}_{k,i}(t),\\
	&{b}_{k}(t)\triangleq{\rm diag}\{b_{k,1}(t),\dots,b_{k,n}(t)\}, &(n\times n)\\
	&{c}_{k}(t)\triangleq{b}_{k}(t)\otimes I_{p_t}, &(np_t\times np_t)\\
	&{P}_{k}(t)\triangleq{\rm diag}\{P_{k,1}(t),\dots,P_{k,n}(t)\}, &(np_t\times np_t)\\
	&\bar{{P}}_{k}(t)\triangleq{\rm diag}\{\bar{P}_{k,1}(t),\dots,\bar{P}_{k,n}(t)\}, &(np_t\times np_t)\\
	&{\mathscr{A}_k}(t)\triangleq\mathcal{A}_k\otimes I_{p_t}, &(np_t\times np_t)\\
	&\epsilon^{\top}_{k,i}(t)=\sum^{\infty}_{q=p_t+1}\varphi^{[q]}_{k,i}\Theta^{[q]}, &(1\times m)\\
	&\epsilon^{\top}_k(t)=\text{col}\{\epsilon^{\top}_{k,1}(t),\cdots,\epsilon^{\top}_{k,n}(t)\}. &(n\times m)\\
\end{align*}
By the definition of $\epsilon^{\top}_{k,i}(t)$ and (\ref{model}), we have 
\begin{align}
	{y}_{k+1,i}^{\top}=\varphi^{\top}_{k,i}(t){\Theta(t)}+w^{\top}_{k+1,i}+\epsilon^{\top}_{k,i}(t).\label{y1}
\end{align}
Thus we obtain the following compact form:
\begin{gather}
	{Y}_{k+1}^{\top}=\Phi^{\top}_{k}(t){\Gamma(t)}+W^{\top}_{k+1}+\epsilon^{\top}_{k}(t).\label{1}
\end{gather}
By Step 1 in Algorithm \ref{algorithm}, we have
\begin{align}\label{al1}
	\left\{
	\begin{aligned}	&b_k(t)=(I_{n}+\Phi^{\top}_{k}(t)P_{k}(t)\Phi_{k}(t))^{-1},\\
		&\bar{{\Theta}}_{k+1}(t)={\Theta}_{k}(t)+c_k(t){P}_{k}(t)\Phi_{k}(t)({Y}_{k+1}^{\top}-\Phi^{\top}_{k}(t){\Theta_k(t)}),\\
		&\bar{{P}}_{k+1}(t)={P}_{k}(t)-c_k(t){P}_{k}(t)\Phi_{k}(t)\Phi^{\top}_{k}(t)P_{k}(t).
	\end{aligned}
	\right.
\end{align}
Substituting (\ref{1}) into (\ref{al1}), we can obtain the following error equation of Algorithm \ref{algorithm}.
\begin{align}
	\widetilde{\bar{{\Theta}}}_{k+1}(t)=&\Gamma(t)-\bar{{\Theta}}_{k+1}(t)\nonumber\\
	=&\Gamma(t)-{\Theta}_{k}(t)-c_k(t){P}_{k}(t)\Phi_{k}(t)\Big(\Phi^{\top}_{k}(t){\Gamma(t)}\nonumber\\
	&+W^{\top}_{k+1}+\epsilon^{\top}_{k}(t)-\Phi^{\top}_{k}(t){\Theta_k(t)}\Big)\nonumber\\
	=&\left(I_{np_t}-c_k(t){P}_{k}(t)\Phi_{k}(t)\Phi^{\top}_{k}(t)\right)\widetilde{{\Theta}}_{k}(t)\nonumber\\
	&-c_k(t){P}_{k}(t)\Phi_{k}(t)\left(W^{\top}_{k+1}+\epsilon^{\top}_{k}(t)\right)\nonumber\\
	=&\bar{{P}}_{k+1}(t){P}^{-1}_{k}(t)\widetilde{{\Theta}}_{k}(t)\nonumber\\
	&-c_k(t){P}_{k}(t)\Phi_{k}(t)\left(W^{\top}_{k+1}+\epsilon^{\top}_{k}(t)\right).\label{bartheta}
\end{align}
By Step 2 in Algorithm \ref{algorithm}, for each agent $i\in\{1,2,...,n\}$
$$
P_{k+1,i}(t)\sum_{j\in \mathcal{N}_{i,k}}a^{(k)}_{ij}{\bar P}^{-1}_{k+1,j}(t){\Theta}(t)={\Theta}(t).
$$
Thus we obtain that 
\begin{align}
	&P_{k+1}(t){\mathscr{A}_k}{\bar P}^{-1}_{k+1}(t)\Gamma(t)\nonumber\\
	=&\text{col}\Big\{P_{k+1,1}(t)\sum_{j\in \mathcal{N}_{1,k}}a^{(k)}_{1j}{\bar P}^{-1}_{k+1,j}(t){\Theta}(t),\cdots,\nonumber\\
	&~~~~~P_{k+1,n}(t)\sum_{j\in \mathcal{N}_{n,k}}a^{(k)}_{nj}{\bar P}^{-1}_{k+1,j}(t){\Theta}(t)\Big\}=\Gamma(t).
\end{align}
Hence by Step 2 in Algorithm \ref{algorithm} and (\ref{bartheta}), we have
\begin{align}
	&\widetilde{\Theta}_{k+1}(t)\nonumber\\
	=&\Gamma(t)-\Theta_{k+1}(t)\nonumber\\
	=&
	\Gamma(t)-P_{k+1}(t){\mathscr{A}_k}(t){\bar P}^{-1}_{k+1}(t)\bar{{\Theta}}_{k+1}(t)\nonumber\\
	=&P_{k+1}(t){\mathscr{A}_k}(t){\bar P}^{-1}_{k+1}(t)\widetilde{\bar{{\Theta}}}_{k+1}(t)\nonumber\\
	=&P_{k+1}(t){\mathscr{A}_k}(t){P}^{-1}_{k}(t)\widetilde{{\Theta}}_{k}(t)-P_{k+1}(t){\mathscr{A}_k}(t){\bar P}^{-1}_{k+1}(t)\nonumber\\
	&\cdot c_k(t){P}_{k}(t)\Phi_{k}(t)\left(W^{\top}_{k+1}+\epsilon^{\top}_{k}(t)\right).\label{error_equation}
\end{align}

Compared with the traditional distributed least squares algorithm designed for stochastic systems with finite parameters (cf. \cite{Xie2021}), this paper focuses on a large regression model with infinite-dimensional regressors. We use a time-varying dimensional space parameterized by $p_t$ to approximate  the infinite-dimensional regressor space.
Consequently, the  variables  in Algorithm \ref{algorithm} and error equation (\ref{error_equation})  exhibit  time-varying increasing dimensions,  which bring us big challenges for rigorous convergence analysis.

\section{Preliminary}\label{convergence}
To establish our main convergence results, we first introduce basic assumptions,  followed by several key lemmas for our theoretical analysis.
\subsection{Assumptions}
\begin{assumption}\label{a1}
	(i)  All digraphs $\{\mathcal{G}_{k}\}$ are weight-balanced.
	(ii) The  sequence 	of digraphs $\{\mathcal{G}_{k}\}_{k\geq 1}$
	is $\delta$-nondegenerate and $L$-jointly connected for some positive integer $L$. 
\end{assumption}
\begin{remark} \label{remark1}
	The parameter $\delta$ in the $\delta$-nondegeneracy condition of the directed graph sequence serves as a uniform lower bound on the weakest connection in the sequence, ensuring that no path's weight decays too rapidly in the weighted adjacency matrices. This property is important for  analyzing products of graph adjacency matrices over time. Assumption \ref{a1} guarantees that for any $k>s$, the entry  $a^{(k,s)}_{ij}$  in the weighted adjacency matrix product satisfies the following properties  (see Lemma 1  in \cite{Nedic2009}):
	
	(a) $a^{(k,s)}_{ii}\geq \delta^{k-s+1}$ for all $i,k,$ and $s$ with $k\geq s$ ;
	
	(b)  $a^{(k,s)}_{ij}\geq \delta^{k-s+1}$ for all $k, s$ with $k\geq s$  and for all $(i,j)\in\mathcal{E}_s\cup\cdots\cup\mathcal{E}_k$;
	
	(c)		 $a^{(k,s)}_{ij}\geq \delta^{k-s+1}$ whenever there exists a vertex $v$  satisfying $(v,j)\in\mathcal{E}_s\cup\cdots\cup\mathcal{E}_r$ for some $r>s$ and $(i,v)\in\mathcal{E}_{r+1}\cup\cdots\cup\mathcal{E}_k$ for $k>r$.
\end{remark}
\begin{assumption}\label{a2}
	For any $i\in\{1,\cdots,n\}$, each element of $\varphi_{k,i}$  is  $\mathscr{F}_k$-measurable, and the noise sequence $\{w_{k,i},\mathscr{F}_k\}$ is a martingale difference sequence  satisfying
	\begin{align*}
		\sup_{k\geq 0}\mathbb{E}[\|w_{k+1,i}\|^{2}|\mathscr{F}_k]<\infty
	\end{align*}
	and  $\|w_{k,i}\|^2=o(d_i(k)),~{\rm a.s.,}$
	where  $\{\mathscr{F}_k\}$ is a non-decreasing family of $\sigma$-algebras  and $\{d_i(k)\}_{k\geq 1}$ is a positive, deterministic, nondecreasing sequence and satisfies
	\begin{align*}
		\sup_k d_i(e^{k+1})/d_i(e^k)<\infty, ~\forall ~i.
	\end{align*}
\end{assumption}  
\begin{remark}
	The condition ``$\|w_{t,i}\|^2=o(d_i(t))$" in Assumption \ref{a2} governs the noise growth rate. This implies that the double array martingale estimation theory (Lemmas \ref{lemma1} and \ref{lemma2} below) is applicable for controlling the aggregate noise impact, expressed as 
	$\max_{1\leq m\leq p_t}\left\|\sum^t_{k=1}f_k(m)w_{k+1}\right\|$. It is straightforward to verify that commonly used bounded or  white Gaussian noises can satisfy this condition. 
\end{remark}
\subsection{Key Lemmas}




\begin{lemma}\label{lemma3}
For any given $l\geq 1$, denote the $p_l$-dimension regression vector of agent $i$ at time $k$ as $\varphi_{k,i}(l)=\text{col}\left\{\varphi_{k,i}^{[1]},\varphi_{k,i}^{[2]}\cdots, \varphi_{k,i}^{[p_l]}\right\}$.  Under Assumptions \ref{a1}(i),  we have  the following property for Algorithm \ref{algorithm} with $\varphi_{k,i}(l)\in\mathbb{R}^{p_l}$:
\begin{align}
	&\operatorname{tr}(V_{t}(l))+\frac{1}{2}\sum^{t-1}_{k=0}\operatorname{tr}\left(\widetilde{\Theta}^{\top}_{k}(l)\Phi_{k}(l)b_k(l)\Phi^{\top}_{k}(l)\widetilde{{\Theta}}_{k}(l)
	\right)\nonumber\\
	\leq & \operatorname{tr}(V_0(l))+4\sum^n_{i=1}\!\sum^{t-1}_{k=0}\|\epsilon_{k,i}(l)\|^2\nonumber\\
	&-2\sum^n_{i=1}\sum^{t-1}_{k=0}b_{k,i}(l)\varphi^{\top}_{k,i}(l)\widetilde{\Theta}_{k,i}(l)w_{k+1,i}\nonumber\\
	&+2\sum^{t-1}_{k=0}\operatorname{tr}\Big(W_{k+1}
	b_k(l)\Phi^{\top}_{k}(l){P}_{k}(l)\Phi_{k}(l)W^{\top}_{k+1}
	\Big),\label{new6}
\end{align}
where $$V_k(l)=\widetilde{\Theta}^{\top}_{k}(l){P}_{k}^{-1}(l)\widetilde{\Theta}_{k}(l),~\epsilon^{\top}_{k,i}(l)=\sum^{\infty}_{q=p_l+1}\varphi^{[q]}_{k,i}\Theta^{[q]}.$$
\end{lemma}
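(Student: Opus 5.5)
We follow the standard stochastic Lyapunov argument for distributed RLS (cf.\ \cite{Xie2021}), applied with the fixed dimension $p_l$; since $l$ is fixed, every object below is finite-dimensional and the error equation (\ref{error_equation}) holds with $t$ replaced by $l$.

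\textbf{Step 1 (fusion step is nonexpansive).} Using $\widetilde{\Theta}_{k+1}(l)=P_{k+1}(l)\mathscr{A}_k(l)\bar P^{-1}_{k+1}(l)\widetilde{\bar\Theta}_{k+1}(l)$, we have blockwise $\widetilde{\Theta}_{k+1,i}=P_{k+1,i}\sum_j a^{(k)}_{ij}\bar P^{-1}_{k+1,j}\widetilde{\bar\Theta}_{k+1,j}$, with $P^{-1}_{k+1,i}=\sum_j a^{(k)}_{ij}\bar P^{-1}_{k+1,j}$. Convexity of $(X,Q)\mapsto X^{\top}Q^{-1}X$, or equivalently Cauchy--Schwarz in the $\bar P^{-1}$ inner products, gives
\[
\widetilde{\Theta}^{\top}_{k+1,i}P^{-1}_{k+1,i}\widetilde{\Theta}_{k+1,i}\le \sum_j a^{(k)}_{ij}\widetilde{\bar\Theta}^{\top}_{k+1,j}\bar P^{-1}_{k+1,j}\widetilde{\bar\Theta}_{k+1,j}.
\]
Summing over $i$ and using weight balance (column sums equal $1$) yields $\operatorname{tr}V_{k+1}(l)\le \operatorname{tr}(\widetilde{\bar\Theta}^{\top}_{k+1}\bar P^{-1}_{k+1}\widetilde{\bar\Theta}_{k+1})$.

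\textbf{Step 2 (local RLS step).} For each agent, (\ref{bartheta}) gives $\widetilde{\bar\Theta}_{k+1,i}=\bar P_{k+1,i}P^{-1}_{k,i}\widetilde{\Theta}_{k,i}-b_{k,i}P_{k,i}\varphi_{k,i}(w^{\top}_{k+1,i}+\epsilon^{\top}_{k,i})$, and also $\bar P_{k+1,i}^{-1}=P_{k,i}^{-1}+\varphi\varphi^{\top}$ and $b_{k,i}P_{k,i}\varphi=\bar P_{k+1,i}\varphi$. Expanding the quadratic form is the classical computation. Write $u=\varphi^{\top}\widetilde\Theta_{k,i}$, $v=w^{\top}+\epsilon^{\top}$ and $a=\varphi^{\top}P\varphi$, so that $b=(1+a)^{-1}$. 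One obtains
\[
\widetilde{\bar\Theta}^{\top}\bar P^{-1}\widetilde{\bar\Theta}=\widetilde\Theta^{\top}P^{-1}\widetilde\Theta-b\,u^{\top}u-2b\,u^{\top}v+ab\,v^{\top}v.
\]
(This follows since $\bar P P^{-1}\widetilde\Theta=\widetilde\Theta-bP\varphi u$, and direct expansion gives the coefficients $-b$, $-2b$ and $ab$.)

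\textbf{Step 3 (splitting $\epsilon$ from $w$).} This is the main bookkeeping step. Split $-2b\,u^{\top}\epsilon\le \tfrac12 b\,u^{\top}u+2b\|\epsilon\|^2$, so that half of $b\,u^{\top}u$ survives on the left as the term $\tfrac12\sum\operatorname{tr}(\widetilde\Theta^{\top}\Phi b\Phi^{\top}\widetilde\Theta)$. Then use $ab\,\|w+\epsilon\|^2\le 2ab\|w\|^2+2ab\|\epsilon\|^2$. Since $b\le1$ and $ab\le1$, the $\epsilon$-contributions total at most $4\|\epsilon_{k,i}\|^2$. The term $-2b\,u^{\top}w$ is kept as the martingale cross term, and $2ab\|w\|^2$ equals the trace term $2\operatorname{tr}(W b\Phi^{\top}P\Phi W^{\top})$ in diagonal form. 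Taking traces, summing over $i$, and combining with Step 1 gives a one-step inequality; telescoping from $k=0$ to $t-1$ then yields (\ref{new6}). The only delicate point is confirming in Step 1 that weight balance (not merely row-stochasticity) is what allows the $\sum_i a_{ij}=1$ collapse; everything else is algebra.
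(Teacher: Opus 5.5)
Your proposal is correct and follows the paper's proof essentially step for step. The paper likewise bounds $V_{k+1}(l)$ by $\widetilde{\bar\Theta}^{\top}_{k+1}\bar P^{-1}_{k+1}\widetilde{\bar\Theta}_{k+1}$ using Lemma~\ref{LemmCoro41}, $\mathscr{A}^{\top}_k P_{k+1}\mathscr{A}_k\le\bar P_{k+1}$, which your joint-convexity argument with unit column sums simply re-derives; it then expands the local quadratic form into the same four terms and splits the $\epsilon$-contributions as $2+2$. The only differences are cosmetic: you bound $ab=a/(1+a)\le 1$ directly rather than citing (\ref{new7}), and your cross term should read $-b(u^{\top}v+v^{\top}u)$ as a matrix, which equals $-2b\operatorname{tr}(u^{\top}v)$ only after taking traces.
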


The proof of Lemma \ref{lemma3} is provided in Appendix \ref{proof:Lemma4.2}.
To deal with the accumulated noise effect which takes the form as $\max_{1\leq m\leq p_t}\left\|\sum^t_{k=1}f_k(m)w_{k+1}\right\|$, we introduce the following two double-array martingale lemmas from \cite{huang1990}.
\begin{lemma}\label{lemma1}
{\rm \cite{huang1990}}
Let $\{v_t,\mathscr{F}_t\}$ be an $s$-dimensional martingale difference sequence satisfying
$\|v_t\|=o(\eta(t))~~{\rm a.s.}$,
where the properties of $\eta(t)$ are described as same as $d_i(t)$ in Assumption \ref{a2}. Assume that $f_t(m), t, m=1,2,...,$ is an $\mathscr{F}_t$-measurable, $r\times s$-dimensional random matrix satisfying
\begin{align}
	\|f_t(m)\|\leq C<\infty ~{\rm a.s.} \label{bou}
\end{align}
for all $t$, $m$ and some deterministic constant $C$. 
Then for $p_t=O(\log^{\alpha} t)$ ~$(\alpha>0)$, the following properties hold as $t\rightarrow\infty$,
\begin{align}
	(i)~~~~~	&\max_{1\leq m\leq p_t}\max_{1\leq j\leq t}\left\|\sum^j_{k=1}f_k(m)v_{k+1}\right\|\\
	=&O\left(\max_{1\leq m\leq p_t}\sum^t_{k=1}\|f_k(m)\|^2\right)+o(\eta(t)\log\log t),~~{\rm a.s.}
\end{align}
provided that 
\begin{align}
	\sup_t \mathbb{E}(\|v_{t+1}\|^2|\mathscr{F}_t)<\infty ~~{\rm a.s..}\label{new1}
\end{align}
\begin{align*}
	(ii)~~~~~	&\max_{1\leq m\leq p_t}\max_{1\leq j\leq t}\left\|\sum^j_{k=1}f_k(m)v_{k+1}\right\|\\
	=&O\left(\max_{1\leq m\leq p_t}\sum^t_{k=1}\|f_k(m)\|\right)+o(\eta(t)\log\log t),~~{\rm a.s.}
\end{align*}
provided that 
\begin{align}
	\sup_t \mathbb{E}(\|v_{t+1}\||\mathscr{F}_t)<\infty ~~{\rm a.s..}
\end{align}
\end{lemma}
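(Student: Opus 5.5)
The plan is to prove both parts by an exponential supermartingale argument. The argument works on geometric blocks of time and uses a union bound over $m\le p_t$. The hypothesis $p_t=O(\log^\alpha t)$ is what makes this union bound affordable. Fix $\varepsilon>0$. Since $\|v_t\|=o(\eta(t))$ a.s., on almost every sample path we have $\|v_{k+1}\|\le\varepsilon\eta(k+1)$ for all large $k$. So I would work with the truncated variables $\bar v_{k+1}=v_{k+1}1\{\|v_{k+1}\|\le \varepsilon\eta(k+1)\}$. These differ from $v_{k+1}$ only finitely often, so replacing $v$ by $\bar v$ changes $\max_m\max_{j\le t}\|\sum_{k\le j} f_k(m)v_{k+1}\|$ by an a.s. finite random quantity. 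The truncated variables are then split into the martingale differences $u_{k+1}=\bar v_{k+1}-\mathbb{E}[\bar v_{k+1}|\mathscr{F}_k]$ plus the conditional means. It suffices to treat each scalar coordinate of $f_k(m)u_{k+1}$, in both signs, since $r$ and $s$ are fixed.

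For the martingale part in case (i), consider the block $t\in[e^{N},e^{N+1})$ and set $\lambda_N=(C\varepsilon\eta(e^{N+1}))^{-1}$. Then $\lambda_N|f_k(m)u_{k+1}|\le 2$ for $k\le e^{N+1}$. Using $e^x\le 1+x+x^2$ for $|x|\le 2$ (after a harmless rescaling), the process
\[
\exp\Big(\lambda_N S_j(m)-c\lambda_N^2\sum_{k\le j}\|f_k(m)\|^2\,\mathbb{E}[\|v_{k+1}\|^2|\mathscr{F}_k]\Big)
\]
is a nonnegative supermartingale. Here $S_j(m)=\sum_{k\le j}f_k(m)u_{k+1}$ (one coordinate, one sign). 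Doob's maximal inequality then gives, for each $m$,
\[
P\Big(\max_{j\le e^{N+1}}\big(S_j(m)-c\lambda_N\sigma\textstyle\sum_{k\le j}\|f_k(m)\|^2\big)>A\lambda_N^{-1}\log N\Big)\le N^{-A}.
\]
Here $\sigma$ bounds the conditional second moments; this bound is a.s. rather than deterministic, so I would localize on the events $\{\sup_k\mathbb{E}[\|v_{k+1}\|^2|\mathscr{F}_k]\le M\}$ and let $M\to\infty$. A union bound over $m\le p_{e^{N+1}}=O(N^\alpha)$ gives total probability $O(N^{\alpha-A})$, which is summable once $A>\alpha+1$. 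Borel--Cantelli then shows that eventually
\[
\max_m\max_{j\le t} S_j(m)\le c\lambda_N\sigma\max_m\sum_{k\le t}\|f_k(m)\|^2+A\,C\varepsilon\,\eta(e^{N+1})\log N.
\]
Since $\lambda_N\to 0$ or at least stays bounded, the first term is $O(\max_m\sum\|f_k(m)\|^2)$. The regularity $\sup_N\eta(e^{N+1})/\eta(e^N)<\infty$ gives $\eta(e^{N+1})\le K\eta(t)$, and $\log N\asymp\log\log t$. Because $\varepsilon$ is arbitrary, the second term is $o(\eta(t)\log\log t)$.

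Case (ii) uses the same scheme with a first-moment bound. When $\lambda_N|x|\le 2$ one has $\mathbb{E}[e^{\lambda x}|\mathscr{F}]\le\exp(c'\lambda\,\mathbb{E}[|x|\,|\mathscr{F}])$ for a centered $x$. The compensator is then $c'\lambda_N\cdot\lambda_N^{-1}$-free, namely $\sum\|f_k(m)\|\,\mathbb{E}[\|v_{k+1}\||\mathscr{F}_k]$. This produces the $O(\max_m\sum\|f_k(m)\|)$ term, while the deviation term is handled exactly as before.

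The step I expect to be the main obstacle is the conditional-mean correction $\sum_k f_k(m)\mathbb{E}[\bar v_{k+1}|\mathscr{F}_k]=-\sum_k f_k(m)\mathbb{E}[v_{k+1}1\{\|v_{k+1}\|>\varepsilon\eta\}|\mathscr{F}_k]$. In case (ii) it is bounded by $\sum\|f_k(m)\|\,\mathbb{E}[\|v_{k+1}\||\mathscr{F}_k]$ and is absorbed directly. In case (i), Chebyshev bounds each term by $\|f_k(m)\|M/(\varepsilon\eta(k+1))$, and I would try Cauchy--Schwarz, $\|f\|\,a\le\|f\|^2+a^2$, to get $O(\sum\|f_k(m)\|^2)+\sum_k M^2/(\varepsilon\eta(k+1))^2$. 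If $\eta$ does not grow fast enough for the last sum to be $o(\eta(t)\log\log t)$, I would instead truncate at a level that increases more slowly, or center differently as in \cite{huang1990}, where this bookkeeping is done carefully.
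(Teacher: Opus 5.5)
The paper gives no proof of this lemma; it is quoted from \cite{huang1990}, so your argument has to stand on its own. Most of the architecture is sound: truncation at $\varepsilon\eta(k+1)$, exponential supermartingales on blocks $[e^N,e^{N+1})$, a union bound over $m\le p_{e^{N+1}}=O(N^\alpha)$, Borel--Cantelli, and $\eta(e^{N+1})\le K\eta(t)$. Part (ii) goes through essentially as you wrote it. After dividing by $\lambda_N$, both the compensator $2c\sum_k\|f_k(m)\|\,\mathbb{E}[\|u_{k+1}\|\,|\mathscr{F}_k]$ and the truncation bias are $O(\sum_k\|f_k(m)\|)$, with a constant independent of $\varepsilon$.

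\textbf{Main gap: the truncation bias in part (i).} Part (i) is left open at exactly the step you flag, and neither fallback you mention closes it. Chebyshev plus $\|f\|a\le\|f\|^2+a^2/4$ leaves $\sum_{k<t}M^2/(\varepsilon\eta(k+1))^2\ge tM^2/(\varepsilon\eta(t))^2$. This is not $o(\eta(t)\log\log t)$ when $\eta$ is bounded or logarithmic, which are the bounded and Gaussian noise cases. Lowering the truncation level only enlarges this sum. Raising it beyond order $\varepsilon\eta(k+1)$ spoils the deviation term $\lambda_N^{-1}\log N$.

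The missing idea is L\'evy's conditional Borel--Cantelli lemma. Put $A_{k+1}=\{\|v_{k+1}\|>\varepsilon\eta(k+1)\}\in\mathscr{F}_{k+1}$. Since $A_{k+1}$ occurs only finitely often a.s., $\sum_k\mathbb{P}(A_{k+1}|\mathscr{F}_k)<\infty$ a.s. With $\sigma=\sup_k\mathbb{E}[\|v_{k+1}\|^2|\mathscr{F}_k]$, conditional Cauchy--Schwarz gives $\|\mathbb{E}[v_{k+1}1_{A_{k+1}}|\mathscr{F}_k]\|^2\le\sigma\,\mathbb{P}(A_{k+1}|\mathscr{F}_k)$. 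Hence the bias is at most $\max_m\sum_{k\le t}\|f_k(m)\|^2+\frac{\sigma}{4}\sum_k\mathbb{P}(A_{k+1}|\mathscr{F}_k)$. This equals $O(\max_m\sum_{k\le t}\|f_k(m)\|^2)+O(1)$, uniformly in $m$ and $j$.

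\textbf{Smaller gap: letting $\varepsilon\downarrow 0$.} In (i) your coefficient of $\max_m\sum_k\|f_k(m)\|^2$ is $c\lambda_N\sigma=c\sigma/(C\varepsilon\eta(e^{N+1}))$. This is of order $1/\varepsilon$ when $\eta$ is bounded. Having ``for every $\varepsilon$, eventually $\le K_\varepsilon X_t+\varepsilon Y_t$'' does not give $O(X_t)+o(Y_t)$ unless $K_\varepsilon$ can be taken independent of $\varepsilon$. For example, $\sqrt{X_tY_t}$ satisfies the former but not, in general, the latter.

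The fix is one line. Truncation also gives $\mathbb{E}[\|u_{k+1}\|^2|\mathscr{F}_k]\le\min\{\sigma,\varepsilon^2\eta^2(k+1)\}$. Hence, for $k+1\le e^{N+1}$, $c\lambda_N\mathbb{E}[\|u_{k+1}\|^2|\mathscr{F}_k]\le\frac{c}{C}\min\{\sigma/(\varepsilon\eta(e^{N+1})),\varepsilon\eta(e^{N+1})\}\le c\sqrt{\sigma}/C$. With this and the bias bound above, eventually $\max_m\max_{j\le t}\|\sum_{k\le j}f_k(m)v_{k+1}\|\le K\max_m\sum_{k\le t}\|f_k(m)\|^2+K'\varepsilon\eta(t)\log\log t+O_\varepsilon(1)$. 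Here $K,K'$ do not depend on $\varepsilon$, and only then does $\varepsilon\downarrow0$ legitimately yield (i). Your localization on $\{\sigma\le M\}$ is unnecessary, since the compensator can use the actual conditional variances.
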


\begin{lemma}\label{lemma2}	{\rm \cite{huang1990}}
Under the conditions of Lemma \ref{lemma1} except (\ref{bou}). If (\ref{new1}) holds, then
\begin{align*}
	&\max_{1\leq m\leq p_t}\max_{1\leq j\leq t}\left\|\sum^j_{k=1}f_k(m)v_{k+1}\right\|\\
	=&O\left(a_t\log a_t\right)+o(a_t\eta(t)\log\log t)\\
	=&O(1)+o(a^2_t)+o([\eta(t)\log\log t]^2)
	~~{\rm a.s.}
\end{align*}
where $a_t=\max_{1\leq m\leq p_t}g_{t}(m),$ with  $g_0(m)=1$, $g_{t}(m)=\left(1+\sum^t_{k=1}\|f_k(m)\|^2\right)^{\frac{1}{2}}$.
\end{lemma}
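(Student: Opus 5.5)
The plan is to reduce Lemma \ref{lemma2} to Lemma \ref{lemma1}(i). I would do this by normalizing the unbounded weights $f_k(m)$ so that they become uniformly bounded, and then undoing the normalization by Abel (summation-by-parts) summation. For each $m$, the normalizing sequence $g_k(m)$ is $\mathscr{F}_k$-measurable, positive and nondecreasing in $k$ with $g_k(m)\ge 1$. I therefore set
\begin{align*}
h_k(m)\triangleq f_k(m)/g_k(m),\qquad T_j(m)\triangleq\sum_{k=1}^j h_k(m)v_{k+1}.
\end{align*}
Then $h_k(m)$ is $\mathscr{F}_k$-measurable and $\|h_k(m)\|\le 1$, because $g_k^2(m)\ge \|f_k(m)\|^2$. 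So (\ref{bou}) holds with $C=1$, and Lemma \ref{lemma1}(i) applies to $\{h_k(m)\}$ together with (\ref{new1}) and $p_t=O(\log^\alpha t)$.

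The first step is to bound the quadratic term that Lemma \ref{lemma1}(i) produces. Since $\|f_k(m)\|^2=g_k^2(m)-g_{k-1}^2(m)$, the elementary estimate $(x_k-x_{k-1})/x_k\le\int_{x_{k-1}}^{x_k}dx/x$ gives
\begin{align*}
\sum_{k=1}^t\|h_k(m)\|^2=\sum_{k=1}^t\frac{g_k^2(m)-g_{k-1}^2(m)}{g_k^2(m)}\le \log g_t^2(m)\le 2\log a_t .
\end{align*}
Lemma \ref{lemma1}(i) then yields
\begin{align*}
\max_{1\le m\le p_t}\max_{1\le j\le t}\|T_j(m)\|=O(\log a_t)+o(\eta(t)\log\log t)\quad \text{a.s.}
\end{align*}

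The second step recovers the original sums. Writing $f_k(m)=g_k(m)h_k(m)$ and using $g_0(m)=1$ and $T_0(m)=0$, Abel summation gives
\begin{align*}
\sum_{k=1}^j f_k(m)v_{k+1}=g_j(m)T_j(m)-\sum_{k=1}^{j}\big(g_k(m)-g_{k-1}(m)\big)T_{k-1}(m).
\end{align*}
Because $g_k(m)$ is nondecreasing, the increments telescope and the norm is at most $2g_j(m)\max_{i\le j}\|T_i(m)\|\le 2a_t\max_{i\le t}\|T_i(m)\|$ for $j\le t$. Here I use that $g_j(m)\le g_t(m)\le a_t$. Taking the maximum over $m\le p_t$ and $j\le t$ and inserting the first step gives $O(a_t\log a_t)+o(a_t\eta(t)\log\log t)$, which is the first equality.

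The second equality is elementary. On one hand, $a_t\log a_t=o(a_t^2)$ as $a_t\to\infty$, and it stays bounded when $a_t$ stays bounded, so it is $O(1)+o(a_t^2)$. On the other hand, for any $\varepsilon>0$ we eventually have $\|o(a_t\eta(t)\log\log t)\|\le\varepsilon a_t\eta(t)\log\log t\le \tfrac{\varepsilon}{2}\big(a_t^2+[\eta(t)\log\log t]^2\big)$.

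The only delicate point is the Abel summation step: the normalizer $g_k(m)$ is random. Measurability of $g_k(m)$ with respect to $\mathscr{F}_k$ is what keeps $h_k(m)v_{k+1}$ a martingale-difference term, so that Lemma \ref{lemma1} can be invoked. Monotonicity of $g_k(m)$ in $k$ is what makes the boundary sum telescope with the constant $2$, uniformly in $m$.
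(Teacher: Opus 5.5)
The paper does not prove this lemma; it imports it from \cite{huang1990}, so there is no internal argument to compare against. Your reduction to Lemma \ref{lemma1}(i) is correct and is the standard way this unbounded-weight estimate is obtained from the bounded one. It also reproduces exactly the shape $O(a_t\log a_t)+o(a_t\eta(t)\log\log t)$ of the stated bound. The steps are: normalize by the $\mathscr{F}_k$-measurable $g_k(m)$ so that (\ref{bou}) holds with $C=1$; use $\sum_{k\le t}\|f_k(m)\|^2/g_k^2(m)\le 2\log a_t$; then undo the normalization by Abel summation with the factor $2g_j(m)\le 2a_t$.

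One point should be stated explicitly. In the last step you use the dichotomy that either $a_t\to\infty$ or $a_t$ stays bounded. That requires $a_t$ to be nondecreasing in $t$, which holds because $p_t$ and each $g_t(m)$ are nondecreasing. For an oscillating $a_t$, the claim $a_t\log a_t=O(1)+o(a_t^2)$ can fail.
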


Thus from Lemmas \ref{lemma3}-\ref{lemma2}, we have the following results.
\begin{lemma}\label{th1}
Consider the model (\ref{model})-(\ref{con}) and Algorithm \ref{algorithm} with $p_t=O(\log^{\alpha}t)$, $\alpha>0$. Under Assumptions \ref{a1}(i) and \ref{a2}, we have the following property for Algorithm \ref{algorithm}:
\begin{align}
	&\operatorname{tr}\left(\widetilde{\Theta}^{\top}_{t}(t){P}_{t}^{-1}(t)\widetilde{\Theta}_{t}(t)\right)\nonumber\\
	&+\left(\frac{1}{2}+o(1)\right)\sum^{t-1}_{k=0}b_{k,i}(t)\|\varphi^{\top}_{k,i}(t)\widetilde{\Theta}_{k,i}(t)\|^2\nonumber\\
	=&o([d(t)\log\log t]^2)+O\left(s_t\right)\nonumber\\
	&+O\left(p_t\log^+\{\lambda_{\max}({P}_{t}^{-1}(t))\}\right), ~{\rm a.s.,}\label{2}
\end{align}
where   $d(t)=\left(\sum^n_{i=1}d^2_i(t)\right)^{\frac{1}{2}}$ with $d_i(t)$ being defined in Assumption \ref{a2}, $s_t=\sum^n_{i=1}\!\sum^{t-1}_{k=0}\|\epsilon_{k,i}(t)\|^2$ and $\log^+\{x\}=\max\{\log x,0\}$.
	\end{lemma}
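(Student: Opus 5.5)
We apply Lemma \ref{lemma3} with $l=t$, so that the dimension $p_t$ is fixed throughout the window $0\le k\le t-1$. Then each term on the right-hand side of (\ref{new6}) is bounded separately. The term $\operatorname{tr}(V_0(t))=\beta^{-1}\|\Gamma(t)-\Theta_0(t)\|^2$ is $O(1)$, because the summability condition (\ref{con}) makes $\|\Theta(t)\|$ uniformly bounded in $t$; we assume the initial estimates are chosen bounded. The approximation term $4\sum_i\sum_k\|\epsilon_{k,i}(t)\|^2$ is exactly $4s_t$. What remains are the cross term $-2\sum_i\sum_k b_{k,i}(t)\varphi^\top_{k,i}(t)\widetilde\Theta_{k,i}(t)w_{k+1,i}$ and the quadratic noise term $2\sum_k\operatorname{tr}(W_{k+1}b_k\Phi_k^\top P_k\Phi_kW_{k+1}^\top)$.

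\textbf{Cross term.} The difficulty is that $\widetilde\Theta_{k,i}(t)$ depends on $t$ through $p_t$. The sum is therefore not a single martingale but a double array, which is why we need the max over $m\le p_t$ in Lemmas \ref{lemma1}--\ref{lemma2}.

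1. Write $\varphi^\top_{k,i}(t)\widetilde\Theta_{k,i}(t)$ as a sum over coordinates, or more naturally index by the dimension. Define $f_k(m)=b_{k,i}(m')\varphi^\top_{k,i}(m')\widetilde\Theta_{k,i}(m')$ for the dimension $p_{m'}=m$. This is $\mathscr F_k$-measurable because Algorithm \ref{algorithm} run at dimension $m$ uses data only up to time $k$.
2. Apply Lemma \ref{lemma2} with $\eta=d_i$. This bounds the cross term by
\[
O(1)+o(a_t^2)+o([d(t)\log\log t]^2),\qquad a_t^2=1+\max_m\sum_k b_{k,i}^2\|\varphi^\top\widetilde\Theta\|^2 .
\]
3. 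Use $b_{k,i}\le1$. Then $b^2_{k,i}\|\varphi^\top\widetilde\Theta\|^2\le b_{k,i}\|\varphi^\top\widetilde\Theta\|^2$, so $a_t^2$ is dominated by the left-hand dissipation sum.

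The $o(a_t^2)$ piece can then be absorbed, which produces the coefficient $\frac12+o(1)$ on the left. \textbf{Main obstacle.} Absorption is only valid if the max over $m\le p_t$ is controlled by the current $m=p_t$ sum. Our first attempt is to reorganize the cross term coordinatewise, using $f_k(q)=b_{k,i}(t)\varphi^{[q]}_{k,i}\widetilde\Theta^{[q]}_{k,i}(t)$ for $q\le p_t$, and accept a factor $p_t=O(\log^\alpha t)$. If that fails, we apply Lemma \ref{lemma2} directly to the family of regressor truncations and argue monotonicity via a stopping-time/maximal-index argument.

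\textbf{Quadratic noise term.} Since $P_k(t)$ is block diagonal, this term equals $\sum_i\sum_k b_{k,i}\varphi^\top_{k,i}P_{k,i}\varphi_{k,i}\|w_{k+1,i}\|^2$.

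1. Split $\|w_{k+1,i}\|^2=\mathbb E[\cdot\mid\mathscr F_k]+(\text{martingale difference})$.
2. The conditional-mean part is $O(\sum_k b_{k,i}\varphi^\top P_{k,i}\varphi)$, because the conditional second moments are bounded by Assumption \ref{a2}.
3. Use the weight-balanced property (Assumption \ref{a1}(i)) and $\log\det$ concavity, in the manner of \cite{Xie2021}. This gives
\[
\sum_{k<t}\sum_i b_{k,i}\varphi^\top_{k,i}P_{k,i}\varphi_{k,i}\le\sum_i\log\det(P^{-1}_{t,i}(t))+np_t\log\beta\le O\big(p_t\log^+\lambda_{\max}(P^{-1}_t(t))\big).
\]
The concavity of $\log\det$ together with the convex combination in Step 2 ensures that $\sum_i\log\det P^{-1}_{k+1,i}\ge\sum_i\log\det\bar P^{-1}_{k+1,i}$, and column sums equal to one make this telescope.
4. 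The martingale-difference part is handled by Lemma \ref{lemma1}(ii) with $f_k=b\varphi^\top P\varphi\le1$. Its contribution is of the same order plus $o(d(t)\log\log t)$.

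Combining all of these bounds yields (\ref{2}).
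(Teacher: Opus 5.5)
Outside the absorption step, your plan follows the paper: Lemma \ref{lemma3} at level $t$; Lemma \ref{lemma2} indexed by truncation level, with $b_{k,i}\le 1$, for the cross term; and a conditional-mean/martingale split with Lemma \ref{lemma1}(ii) for the quadratic noise term. Your $\log\det$-concavity telescoping is a valid substitute for the paper's appeal to (33) of \cite{Xie2021}: weight balance gives $\sum_i\log\det P^{-1}_{k+1,i}\ge\sum_i\log\det\bar P^{-1}_{k+1,i}$, and $\log(1+x)\ge x/(1+x)$. You also handle $\operatorname{tr}(V_0(t))$, which the paper leaves implicit.

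\textbf{The gap.} The step you flag yourself is left open, and it is the crux. Lemma \ref{lemma2} returns $o(a_t^2)$ with $a_t^2=1+\max_{p_l\le p_t}\sum_{k<t}b^2_{k,i}(l)\|\varphi^{\top}_{k,i}(l)\widetilde\Theta_{k,i}(l)\|^2$. The left side of (\ref{new6}) at level $t$ contains only the level-$t$ dissipation. Neither of your fallbacks closes this.

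\emph{Coordinatewise fallback.} The family $b_{k,i}(t)\varphi^{[q]}_{k,i}\widetilde\Theta^{[q]}_{k,i}(t)$ depends on $t$ through $p_t$, so it is not a fixed double array. Even if you index by pairs (coordinate, level), $\sum_k b^2_{k,i}|\varphi^{[q]}_{k,i}|^2\|\widetilde\Theta^{[q]}_{k,i}\|^2$ is not dominated by $\sum_k b_{k,i}\|\varphi^{\top}_{k,i}\widetilde\Theta_{k,i}\|^2$. Coordinate contributions can cancel inside the inner product, and Cauchy--Schwarz only gives $\|\sum_q x_q\|^2\le p_t\sum_q\|x_q\|^2$, which is the wrong direction. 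So nothing can be absorbed; the problem is not a lost factor $p_t$.

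\emph{``Monotonicity'' fallback.} This has no basis, since the dissipation sum is not monotone in the truncation level.

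\textbf{What the paper does.} It applies Lemma \ref{lemma3} at every level $p_l\le p_t$ and maximizes over $l$ on the left. This puts $\frac12\sum_i h_{t,i}^2$, with $h_{t,i}$ as in (\ref{hti}), on the left; the paper absorbs $o(\sum_i h_{t,i}^2)$ and then reads off $l=t$.

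If you adopt this, note that the maximization also acts on the approximation term. Writing $s_t(l)\triangleq\sum_i\sum_{k<t}\|\epsilon_{k,i}(l)\|^2$, the right side becomes $4\max_{p_l\le p_t}s_t(l)$, not the $4s_t$ that appears in the paper's display. This maximum is governed by the coarsest truncation. For example, take $\varphi^{[q]}_{k,i}\equiv 1$, $\Theta^{[q]}=\lambda^q$ and $p_t=[\log^{\alpha}t]$ with $\alpha>1$: then $s_t=o(1)$, while $s_t(l)\asymp t$ for any fixed level. Done honestly, the device therefore yields only $O(s_t)+o(\max_{p_l\le p_t}s_t(l))$.

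\textbf{A possible repair, close to your stopping-time idea.} Freeze the level-$m$ sequence after the last deterministic time $\tau_m$ with $p_{\tau_m}=m$, i.e.\ use $f_k(m)\mathbf{1}\{k<\tau_m\}$. Then:
\begin{itemize}
\item the cross terms at all times $t'\le t$ are partial sums covered by a single double-array maximum;
\item $a_t^2\le 1+O(\max_{t'\le t}L_{t'})$, where $L_{t'}$ is the left side of (\ref{new6}) at time $t'$ and level $p_{t'}$.
\end{itemize}
Absorbing into this running maximum should give (\ref{2}) with $s_t$ replaced by $\max_{t'\le t}s_{t'}$.
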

	
	\begin{remark}\label{re4.2}
The proof of Lemma \ref{th1} is given in Appendix \ref{proof:th1}. 
From \cite{huang1990}, we know that if $p_t$ is  assumed to satisfy $p_t=O(t^\alpha)$, $\alpha>0$, then the results \eqref{2} in Lemma \ref{th1} still hold, provided that $\log\log t$ in \eqref{2} is replaced by $\log t$.
\end{remark}

\section{Main Results}\label{main_results}
In this section, we will establish the strong consistency of Algorithm \ref{algorithm} under a cooperative excitation condition for the regression signals, and will analyze  the prediction error without relying on any excitation conditions.
\subsection{Parameter convergence}
We first provide a theoretical upper bound for the parameter estimation error induced by Algorithm \ref{algorithm}.

\begin{theorem}\label{th2}
Consider the model (\ref{model})-(\ref{con}) and Algorithm \ref{algorithm} with $p_t=O(\log^{\alpha}t)$, $\alpha>0$. Under Assumptions \ref{a1} and \ref{a2}, then as $t\rightarrow\infty$, the upper bound of estimation error of Algorithm \ref{algorithm} is
\begin{align}
	&\|{\hat\Theta}_{i}(t)-\Theta\|^2\nonumber\\
	=&O\!\left(\frac{1}{\lambda_{\min}(t)}\left[p_t\log r_t\!+\!s_t\! +\!o\!\left([d_t\log\log(t)]^2\right)\right]+\gamma_t\right) \text{a.s.,}\label{th}
\end{align}
where $s(t)$ and $d(t)$ are defined in Lemma \ref{th1} and
\begin{align*}
	&{\hat\Theta}_{i}(t)\triangleq\text{col}\{\Theta_{t,i}(t),\bm 0_m,\bm 0_m,...\}\in\mathbb{R}^{\infty\times m},\\
	&\lambda_{\min}(t)\triangleq\lambda_{\min}\left(\sum^n_{i=1}\sum^{t-nL-2}_{k=0}\varphi_{k,i}(t)\varphi^{\top}_{k,i}(t)+\frac{1}{\beta}I_{p_t}\right),\\
	&r_t\triangleq1+\sum^n_{j=1}\sum^{t-1}_{k=0}\|\varphi_{k,j}(t)\|^2,~
	\gamma_t\triangleq\left(\sum^{\infty}_{q=p_t+1}\left\|\Theta^{[q]}\right\|\right)^2
\end{align*}
with $\Theta^{[q]}\in\mathbb{R}^{m}$ being  the $q$-th row of the matrix $\Theta$.
	\end{theorem}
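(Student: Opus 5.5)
First split the error into a truncation part and an estimation part. Since $\hat\Theta_i(t)$ is zero beyond row $p_t$,
\begin{align*}
\|\hat\Theta_i(t)-\Theta\|^2\le 2\|\widetilde\Theta_{t,i}(t)\|^2+2\Big\|\sum_{q>p_t}\Theta^{[q]}\Big\|^2 ,
\end{align*}
and the tail norm is at most $\sum_{q>p_t}\|\Theta^{[q]}\|$. This tail produces the term $\gamma_t$. It then suffices to show that $\|\widetilde\Theta_{t,i}(t)\|^2$ is bounded by $\lambda_{\min}(t)^{-1}$ times the bracket. We start from Lemma \ref{th1}, which gives
\begin{align*}
\operatorname{tr}\big(\widetilde{\Theta}^{\top}_{t}(t)P_t^{-1}(t)\widetilde{\Theta}_{t}(t)\big)=O\big(p_t\log^+\lambda_{\max}(P_t^{-1}(t))+s_t\big)+o([d(t)\log\log t]^2).
\end{align*}
Since $\widetilde\Theta_t(t)$ stacks the agents' errors and $P_t(t)$ is block diagonal, the left side dominates $\lambda_{\min}(P^{-1}_{t,i}(t))\|\widetilde\Theta_{t,i}(t)\|^2$ for each $i$.

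Next we bound the two spectral quantities. For the upper bound on $\lambda_{\max}$, we use (\ref{p}): $P^{-1}_{t,i}(t)=\sum_{j}\sum_{l\le t-1}a^{(t-1,l)}_{ij}\varphi_{l,j}(t)\varphi_{l,j}^\top(t)+\frac1\beta I$. Because every entry of a product of stochastic matrices is at most $1$, we get $\lambda_{\max}(P^{-1}_{t}(t))\le \operatorname{tr}\le \frac{p_t}{\beta}+r_t$. As $p_t=O(\log^\alpha t)$, this gives $\log^+\lambda_{\max}=O(\log r_t)$ (using $r_t\ge1$ and $\log p_t=O(\log t)$, which can be absorbed). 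For the lower bound on $\lambda_{\min}$, we use Remark \ref{remark1}. By $L$-joint connectivity and $\delta$-nondegeneracy, for any $l\le t-nL-2$ the product $\mathcal{A}(t-1,l)$ covers a window of length at least $nL$, so every entry satisfies $a^{(t-1,l)}_{ij}\ge \delta^{nL+1}$ or a comparable positive constant $c_\delta$. Entries of products are nondecreasing in positivity once the path exists, and we use property (a) to extend paths by self-loops. Dropping the terms with $l>t-nL-2$ then yields $P^{-1}_{t,i}(t)\ge c_\delta\big(\sum_j\sum_{l=0}^{t-nL-2}\varphi_{l,j}(t)\varphi_{l,j}^\top(t)+\frac1\beta I\big)$, and hence $\lambda_{\min}(P^{-1}_{t,i}(t))\ge c_\delta\lambda_{\min}(t)$.

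Finally, combining these steps gives $\|\widetilde\Theta_{t,i}(t)\|^2=O\big(\lambda_{\min}(t)^{-1}[p_t\log r_t+s_t+o((d_t\log\log t)^2)]\big)$, and adding $\gamma_t$ gives (\ref{th}).

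The main obstacle is the uniform positive lower bound on $a^{(t-1,l)}_{ij}$ for all $l\le t-nL-2$. The weights decay geometrically in general, so I would not take the product over the whole horizon. Instead, I would write $\mathcal{A}(t-1,l)=\mathcal{A}(t-1,t-nL-1)\mathcal{A}(t-nL-2,l)$. The first factor has all entries at least $\delta^{nL}$, because $n-1$ consecutive jointly connected windows give a path between any pair of nodes. The second factor is doubly stochastic by weight balance, so its columns sum to one. Therefore $a^{(t-1,l)}_{ij}=\sum_v a^{(t-1,t-nL-1)}_{iv}a^{(t-nL-2,l)}_{vj}\ge\delta^{nL}$.
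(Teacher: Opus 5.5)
Your proposal is correct up to one fixable step, and it follows essentially the paper's route. The shared steps are: Lemma~\ref{th1}, the block-diagonal reduction to $\lambda_{\min}(P^{-1}_{t,i}(t))$, a uniform lower bound on $a^{(t-1,l)}_{ij}$ from a product split plus column-stochasticity (weight balance), the bound $\lambda_{\max}(P^{-1}_t(t))=O(r_t)$, and the tail term $\gamma_t$.

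Your single split $\mathcal{A}(t-1,l)=\mathcal{A}(t-1,t-nL-1)\mathcal{A}(t-nL-2,l)$ is slightly tidier than the paper's two-stage split aligned at $K_{t-1}L$, which uses row sums on the left block and column sums on the right. Any $nL+1$ consecutive steps contain $n-1$ full aligned windows, so your split covers every $l\le t-nL-2$ directly. The paper's displayed chain is written only for $l\le (K_{t-1}-n)L-1$ and then used up to $t-nL-2$. The left factor has $nL+1$ factors, so your constant is $\delta^{nL+1}$ rather than $\delta^{nL}$; this is immaterial.

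The step to repair is the bound on $\lambda_{\max}$. Bounding it by the trace of the whole block adds a $p_t/\beta$ term. Your claim that $\log p_t$ ``can be absorbed'' is unjustified: nothing in the bracket dominates $p_t\log p_t$ when $r_t$ grows slowly (for instance, bounded $r_t$ and bounded $d(t)$). Use instead $\lambda_{\max}(A+\beta^{-1}I_{p_t})=\beta^{-1}+\lambda_{\max}(A)\le \beta^{-1}+\operatorname{tr}(A)\le\max\{1,\beta^{-1}\}\,r_t$, exactly as in (\ref{51}).

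One notational fix as well. The tail is the stacked block $\text{col}\{\Theta^{[p_t+1]},\Theta^{[p_t+2]},\dots\}$, not the row sum $\sum_{q>p_t}\Theta^{[q]}$. Its spectral norm is at most its Frobenius norm $\big(\sum_{q>p_t}\|\Theta^{[q]}\|^2\big)^{1/2}\le\sum_{q>p_t}\|\Theta^{[q]}\|$, which yields $\gamma_t$ as intended.
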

	
	\begin{proof}
For any given  $t$, there exists a positive integer $K_{t-1}=[ \frac{t-1}{L} ]\footnote{The symbol $[x]$ denotes the maximum integer less than $x$.}$ such that $K_{t-1}L\leq t-1 \leq (K_{t-1}+1)L-1$. Then
by Assumption \ref{a1} (ii), 	for any pair of agents $v, u\in\{1,...,n\}$, there exists a path
$$v_n\triangleq u\rightarrow v_{n-1}\rightarrow v_{n-2}\rightarrow\cdots\rightarrow v_{1}\rightarrow v\triangleq v_0$$
such that   $v_{s}\rightarrow v_{s-1}$ is an directed edge in graph 
$\mathcal{G}_{(K_{t-1}-s) L}\cup\cdots\cup\mathcal{G}_{(K_{t-1}-s+1)L-1}$ with $1\leq s\leq n$. Recall that $a^{(k,s)}_{vu}$ is the $v$-th row and the $u$-th column element of
$\mathcal{A}(k,s)=\mathcal{A}_{k}\cdots\mathcal{A}_{s}$.	
Thus by Remark \ref{remark1} (a) and (b), we have 
\begin{align}
	&a^{(K_{t-1}L-1,(K_{t-1}-n)L)}_{vu}\nonumber\\
	\geq& a^{(K_{t-1}L-1,(K_{t-1}-1)L)}_{vv_{1}}a^{((K_{t-1}-1)L-1,(K_{t-1}-2)L)}_{v_{1}v_{2}}\nonumber\\
	&\cdots a^{((K_{t-1}-n+1)L-1,(K_{t-1}-n)L)}_{v_{n-1}u}\nonumber\\
	\geq & \underbrace{\delta^{L} \delta^{L}\cdots  \delta^{L}}_{n}
	=  \delta^{nL}.\label{45}
\end{align}
Then for any $t\geq nL+2$ and any  $l\leq (K_{t-1}-n)L-1\leq t-nL-2$, it can be derived from  Assumption \ref{a1} (i) and (\ref{45}) that 
\begin{align}
	&a^{(t-1,l)}_{ij}=\sum^{n}_{v=1}	a^{(t-1,K_{t-1}L)}_{iv}a^{(K_{t-1}L-1,l)}_{vj}\nonumber\\
	\geq&\sum^{n}_{v=1}	a^{(t-1,K_{t-1}L)}_{iv}\min_{v\in\{1,...,n\}}a^{(K_{t-1}L-1,l)}_{vj}
	\nonumber\\
	=&\min_{v\in\{1,...,n\}}a^{(K_{t-1}L-1,l)}_{vj}\nonumber\\
	=&\min_{v\in\{1,...,n\}}\left\{\sum^n_{u=1}a^{(K_{t-1}L-1,(K_{t-1}-n)L)}_{vu}a^{((K_{t-1}-n)L-1,l)}_{uj}\right\}	\nonumber\\
	\geq&\min_{v\in\{1,...,n\}}\left\{\sum^n_{u=1}\delta^{nL}a^{((K_{t-1}-n)L-1,l)}_{uj}\right\}
	=\delta^{nL}.
\end{align}
By  (\ref{p}), we have
\begin{align}
	P^{-1}_{t,i}(t)=&\sum^n_{j=1}\sum^{t-1}_{l=0}a^{(t-1,l)}_{ij}\varphi_{l,j}(t)\varphi^{\top}_{l,j}(t)
	\!+\frac{1}{\beta}I_{p_t}\label{47}\\
	\geq& \sum^n_{j=1}\sum^{t-nL-2}_{l=0}a^{(t-1,l)}_{ij}\varphi_{l,j}(t)\varphi^{\top}_{l,j}(t)
	\!+\frac{1}{\beta}I_{p_t}\nonumber\\
	\geq& \delta^{nL}\sum^n_{j=1}\sum^{t-nL-2}_{l=0}\varphi_{l,j}(t)\varphi^{\top}_{l,j}(t)
	\!+\frac{1}{\beta}I_{p_t}.
\end{align}
Since $\delta<1$, then the following inequality
\begin{align}
	\lambda_{\min}(P^{-1}_{t}(t))&\geq \delta^{nL}\lambda_{\min}\left(\sum^n_{j=1}\sum^{t-nL-2}_{l=0}\varphi_{l,j}(t)\varphi^{\top}_{l,j}(t)
	\!+\!\frac{1}{\beta}I_{p_t}\right)\nonumber\\
	&=\delta^{nL}\lambda_{\min}(t)
	\label{49}
\end{align}
holds.
Moreover, by (\ref{47})  we have for any $i\in\{1,...,n\}$
\begin{align*}
	\lambda_{\max}(P^{-1}_{t,i}(t))&\!\leq\! \lambda_{\max}\left(\sum^n_{j=1}\sum^{t-1}_{l=0}a^{(t-1,l)}_{ij}\varphi_{l,j}(t)\varphi^{\top}_{l,j}(t)
	\!+\!\frac{1}{\beta}I_{p_t}\right)\nonumber\\
	&\leq \frac{1}{\beta}+\sum^n_{j=1}\sum^{t-1}_{l=0}\|\varphi_{l,j}(t)\|^2.
\end{align*}
Thus we have  
\begin{align}
	\lambda_{\max}(P^{-1}_{t}(t))&=\max_{i\in\{1,...,n\}}\lambda_{\max}(P^{-1}_{t,i}(t))\nonumber\\
	&\leq \frac{1}{\beta}+\sum^n_{j=1}\sum^{t-1}_{l=0}\|\varphi_{l,j}(t)\|^2=O(r_t).\label{51}
\end{align}
Hence by (\ref{49}), (\ref{51}) and Lemma \ref{th1}, it follows that
\begin{align}
	&\!\!\operatorname{tr}(\widetilde{\Theta}^{\top}_{t}(t)\widetilde{\Theta}_{t}(t))\leq\frac{\operatorname{tr}\left(\widetilde{\Theta}^{\top}_{t}(t){P}_{t}^{-1}(t)\widetilde{\Theta}_{t}(t)\right)}{\lambda_{\min}(P^{-1}_{t}(t))}\nonumber\\
	\!\!=&O\Bigg(\frac{1}{\lambda_{\min}(t)}[p_t\log^+\{\lambda_{\max}({P}_{t}^{-1}(t))\}\nonumber\\
	&~~+s_t+o([d(t)\log\log t]^2)]\Bigg)\nonumber\\
	\!\!=&O\left(\frac{1}{\lambda_{\min}(t)}\left[p_t\log r_t\!+\!s_t\! +\!o\left([d_t\log\log(t)]^2\right)\right]\right)\label{52} \text{a.s.}.
\end{align}
Moreover, we have
$
\sum^{\infty}_{q=p_t+1}\operatorname{tr}\left(\left(\Theta^{[q]}\right)^{\top}\Theta^{[q]}\right)
= O(\gamma_t).
$

Thus combining this with (\ref{52}),
by the definition of ${\hat\Theta}_{i}(t)$,  we have
\begin{align*}
	\!\!&	\|{\hat\Theta}_{i}(t)-\Theta\|^2\leq \operatorname{tr}\left(({\hat\Theta}_{i}(t)-\Theta)^{\top}({\hat\Theta}_{i}(t)-\Theta)\right)\nonumber\\
	\!\!\leq& \operatorname{tr}\left((\Theta_{t,i}(t)-\Theta(t))^{\top}(\Theta_{t,i}(t)-\Theta(t))\right)\nonumber\\
	&+\sum^{\infty}_{q=p_t+1}\operatorname{tr}\left(\left(\Theta^{[q]}\right)^{\top}\Theta^{[q]}\right)\nonumber\\
	\!\!=&O\!\left(\frac{1}{\lambda_{\min}(t)}\left[p_t\log r_t\!+\!s_t\! +\!o\!\left([d_t\log\log(t)]^2\right)\right]+\gamma_t\right) \text{a.s.}
\end{align*}
This completes the proof of the theorem.
\end{proof}

From Remark \ref{re4.2}, we can see that if in Theorem \ref{th2} the regression lag $p_t$ is of $O(t^{\alpha})$, $0<\alpha\leq 1$, then similar results can be derived.
\begin{theorem}\label{th2'}
Under the conditions of Theorem \ref{th2}, if the regression lag $p_t$ in Algorithm \ref{algorithm} is taken as $p_t=O(t^{\alpha})$, $0<\alpha\leq 1$, then (\ref{th}) hold with ``$\log\log t$" in them being replaced by $\log t$.
\end{theorem}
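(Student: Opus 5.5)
The argument for Theorem \ref{th2'} repeats the proof of Theorem \ref{th2} word for word. The only change is the almost sure bound on the accumulated energy, which must be replaced by its counterpart for polynomially growing regressor dimension. Accordingly, I would first establish a version of Lemma \ref{th1} for $p_t=O(t^\alpha)$ with $0<\alpha\le 1$, and then rerun the graph-theoretic part of the proof of Theorem \ref{th2}. That part does not depend on how fast $p_t$ grows.

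\textbf{Step 1: polynomial version of Lemma \ref{th1}.} Lemma \ref{lemma3} is purely algebraic and holds for every fixed $l$, so it remains valid unchanged. Taking $l=t$ there, three terms remain to be controlled.
\begin{itemize}
\item The cross term $\sum_{i}\sum_k b_{k,i}(t)\varphi^\top_{k,i}(t)\widetilde\Theta_{k,i}(t)w_{k+1,i}$.
\item The quadratic noise term $\sum_k \operatorname{tr}(W_{k+1}b_k\Phi_k^\top P_k\Phi_k W_{k+1}^\top)$.
\item The truncation term $4s_t$, which passes through unchanged.
\end{itemize}
In the proof of Lemma \ref{th1}, both noise terms are handled with the double-array martingale estimates of Lemmas \ref{lemma1}--\ref{lemma2}, which require $p_t=O(\log^\alpha t)$. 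By the result of \cite{huang1990} quoted in Remark \ref{re4.2}, the same double-array estimates hold for $p_t=O(t^\alpha)$ once $o(\eta(t)\log\log t)$ is replaced by $o(\eta(t)\log t)$. The reason is that a union bound over the $p_t$ index values costs a factor $\log p_t=O(\log t)$ instead of $O(\log\log t)$. Substituting these estimates into exactly the same chain of inequalities gives \eqref{2} with $[d(t)\log t]^2$ in place of $[d(t)\log\log t]^2$:
\begin{itemize}
\item the cross term is absorbed by $o(1)\sum_k b_{k,i}\|\varphi^\top\widetilde\Theta\|^2$ plus the modified $o(\cdot)$ term;
\item the quadratic term is bounded by $O(p_t\log^+\lambda_{\max}(P_t^{-1}(t)))$ plus the modified $o(\cdot)$ term, via the usual bound $\sum_k \varphi^\top_k P_{k+1}\varphi_k\le \log\det P^{-1}_t$.
\end{itemize}

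\textbf{Step 2: rerun the proof of Theorem \ref{th2}.} Assumption \ref{a1} and Remark \ref{remark1} give $a^{(t-1,l)}_{ij}\ge\delta^{nL}$ for $l\le t-nL-2$. From this, \eqref{49} yields $\lambda_{\min}(P_t^{-1}(t))\ge \delta^{nL}\lambda_{\min}(t)$, and \eqref{51} yields $\lambda_{\max}(P_t^{-1}(t))=O(r_t)$. Neither step uses the growth rate of $p_t$. Dividing the modified bound from Step 1 by $\lambda_{\min}(P_t^{-1}(t))$ and then adding the tail $\sum_{q>p_t}\|\Theta^{[q]}\|^2\le\gamma_t$, as in the last display of the proof of Theorem \ref{th2}, gives \eqref{th} with $\log\log t$ replaced by $\log t$.

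\textbf{Main obstacle.} The difficult step is the justification in Step 1 that the double-array martingale bounds really extend to $p_t=O(t^\alpha)$ with the $\log t$ factor. In particular, the version of Lemma \ref{lemma2} with unbounded $f_k(m)=b_k P_k\varphi_k$-type weights must still produce a term of the form $O(a_t\log a_t)$ with $\log a_t=O(\log r_t)$ that can be absorbed into $p_t\log r_t$. I would rely on \cite{huang1990}, as Remark \ref{re4.2} does. I would also check that the restriction $\alpha\le1$ (consistent with $p_t=O(t)$) keeps $\log p_t=O(\log t)$, so that no extra factors arise.
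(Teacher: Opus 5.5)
Your proposal is correct and follows the paper's own argument. The paper proves Theorem~\ref{th2'} by invoking Remark~\ref{re4.2}: the double-array martingale bounds of \cite{huang1990} hold with $\log t$ in place of $\log\log t$ when $p_t=O(t^{\alpha})$, which gives the polynomial analogue of Lemma~\ref{th1}, and the proof of Theorem~\ref{th2}, whose graph-theoretic part does not depend on $p_t$, is then repeated.

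One small correction to your closing remark. Lemma~\ref{lemma2} is applied with the unbounded weights $b_{k,i}\varphi^{\top}_{k,i}\widetilde{\Theta}_{k,i}$, not $b_kP_k\varphi_k$. The resulting $O(a_t\log a_t)=o(a_t^2)+O(1)$ is absorbed into the left-hand side term $\frac{1}{2}\sum_i h_{t,i}^2$, as in your own first bullet, rather than into $p_t\log r_t$.
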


Under some additional conditions, the above results in Theorems \ref{th2} and \ref{th2'} can be  refined and articulated in a more concise form. Thus we can obtain the following results.
\begin{theorem}\label{re5.1}
Under the conditions of Theorem \ref{th2}, if  the following conditions hold:

(i) the noise process $\{\omega_{k,i}\}$ satisfy $\omega_{k,i}=O(\log^{1/2}k)$ for all $i$, cf., Gaussian white noise (i.i.d.); 

(ii) the unknown parameters satisfy $\Theta^{[q]}=O(\lambda^q)$ with some $0<\lambda<1$  for $q\geq 1$; 

(iii) the regression vectors satisfy $\sum^n_{i=1}\sum^{t-1}_{k=0}\big\|\varphi^{[q]}_{k,i}\big\|^2=O(t^b)$ for all $q\geq 1$ and some $b\geq 0$.

Then by taking the regression lag $p_t=[\log^{\alpha}t]$ with some $\alpha>1$,
or taking $p_t=[t^{\alpha_1}]$ with $0<\alpha_1 < 1$, we have
\begin{align}
	\|{\hat\Theta}_{i}(t)-\Theta\|^2
	=O\left(\frac{p_t\log t}{\lambda_{\min}(t)}\right) \text{a.s.,}\label{r3}
\end{align}
where $\lambda_{\min}(t)$ is defined in Theorem \ref{th2}.
\end{theorem}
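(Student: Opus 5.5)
The plan is to plug the three extra hypotheses into the bound (\ref{th}) of Theorem \ref{th2} (or Theorem \ref{th2'} when $p_t=[t^{\alpha_1}]$). Each of the four terms $p_t\log r_t$, $s_t$, $o([d(t)\log\log t]^2)$ (or with $\log t$) and $\gamma_t$ will be shown to be $O(p_t\log t)$ or $O(p_t\log t/\lambda_{\min}(t))$. Throughout I use that $\lambda_{\min}(t)$ is bounded below by $1/\beta$ and, trivially, above by $\lambda_{\max}$ of the same matrix, which is $O(r_t)$.

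\textbf{Step 1 (the $\log r_t$ term).} By (iii),
$$r_t = 1+\sum_{q=1}^{p_t}\sum_{i}\sum_{k}\|\varphi^{[q]}_{k,i}\|^2 = O(p_t t^b).$$
Since $p_t=O(t)$, this gives $\log r_t=O(\log t)$, so $p_t\log r_t=O(p_t\log t)$.

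\textbf{Step 2 (the noise term).} Condition (i) says $\|w_{k,i}\|^2=O(\log k)$. So Assumption \ref{a2} holds with $d_i(k)=\log k\cdot(\log\log k)^{\varepsilon}$, or more simply $d_i(k)=\log k$ with the $o$ absorbed. This $d_i$ satisfies the ratio condition $\sup_k d_i(e^{k+1})/d_i(e^k)<\infty$.

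In the $\log^\alpha t$ case this gives $[d(t)\log\log t]^2=O(\log^2 t(\log\log t)^2)$. Because $\alpha>1$, $p_t\log t=\log^{1+\alpha}t$ dominates this. That is exactly why $\alpha>1$ is required.

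In the $t^{\alpha_1}$ case Theorem \ref{th2'} replaces $\log\log t$ by $\log t$, giving $O(\log^4 t)=o(t^{\alpha_1}\log t)$.

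\textbf{Step 3 (the truncation terms $s_t$ and $\gamma_t$).} This is the main obstacle, because $s_t$ involves the infinite tail $\epsilon_{k,i}(t)$ and must be controlled uniformly in $q$ using (ii)–(iii).

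For $s_t$, apply Cauchy–Schwarz with weights $\lambda^{q}$:
$$\|\epsilon_{k,i}(t)\|^2\le \Big(\sum_{q>p_t}\lambda^{q}\Big)\sum_{q>p_t}\lambda^{-q}\|\Theta^{[q]}\|^2|\varphi^{[q]}_{k,i}|^2=O\Big(\lambda^{p_t}\sum_{q>p_t}\lambda^{q}|\varphi^{[q]}_{k,i}|^2\Big).$$
Summing over $k$ and $i$, and using that the constant in (iii) is uniform in $q$ (which I would read into "for all $q$"), gives
$$s_t=O\big(\lambda^{2p_t}t^b\big)=O\big(t^b e^{-2|\log\lambda|p_t}\big).$$
For $p_t=[\log^\alpha t]$ with $\alpha>1$, and likewise for $p_t=[t^{\alpha_1}]$, the exponent $p_t$ grows faster than $\log t$. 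Hence $\lambda^{2p_t}t^b\to0$, so $s_t=o(1)=O(p_t\log t)$. The $\alpha>1$ choice is used here a second time, since $p_t=c\log t$ would not suffice against $t^b$.

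For $\gamma_t$, hypothesis (ii) gives $\gamma_t=O(\lambda^{2p_t})$. Since this decays faster than any power of $t$, and $\lambda_{\min}(t)\le \lambda_{\max}=O(r_t)=O(p_tt^b)$, we get
$$\gamma_t=O\big(\lambda^{2p_t}\big)=o\big(p_t\log t/(p_tt^b)\big)\subseteq O\big(p_t\log t/\lambda_{\min}(t)\big).$$

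\textbf{Combining.} Substituting Steps 1–3 into (\ref{th}) yields (\ref{r3}).
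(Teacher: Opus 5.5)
Your proposal is correct and follows essentially the same route as the paper: you bound $s_t=O(\gamma_t t^b)=O(\lambda^{2p_t}t^b)$ by Cauchy--Schwarz using the uniform-in-$q$ constant in (iii), and you note $r_t=O(p_t t^b)$. You then absorb $\gamma_t$ via a polynomial upper bound on $\lambda_{\min}(t)$ (the paper uses a diagonal entry to get $\lambda_{\min}(t)=O(t^b)$, you use $O(r_t)$, which suffices since $\lambda^{2p_t}$ decays superpolynomially), and you invoke (i) for the noise term. Your noise step is in fact more explicit than the paper's; just commit to $d_i(k)=\log k\,(\log\log k)^{\varepsilon}$ rather than $d_i(k)=\log k$, since Assumption \ref{a2} requires $\|w_{k,i}\|^2=o(d_i(k))$.
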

\begin{proof}
By the definition of $\epsilon^{\top}_{k,i}(l)$ and Schwartz inequality,  we have the following bound for $s_t$ that
\begin{align}
	s_t=&\sum^n_{i=1}\sum^{t-1}_{k=0}\|\epsilon_{k,i}(t)\|^2\nonumber\\
	=&\sum^n_{i=1}\sum^{t-1}_{k=0}\left\|\sum^{\infty}_{q=p_t+1}\varphi^{[q]}_{k,i}\Theta^{[q]}\right\|^2\nonumber\\
	\leq&\sum^n_{i=1}\sum^{t-1}_{k=0}\left(\sum^{\infty}_{q=p_t+1}\left\|\varphi^{[q]}_{k,i}\right\|\left\|\Theta^{[q]}\right\|\right)^2\nonumber\\
	\leq\!&\sum^{\infty}_{q=p_t+1}\left\|\Theta^{[q]}\right\|\cdot
	\sum^{\infty}_{q=p_t+1}\left(\left\|\Theta^{[q]}\right\|\sum^n_{i=1}\sum^{t-1}_{k=0}\left\|\varphi^{[q]}_{k,i}\right\|^2\right)\nonumber\\
	=&O(\gamma_t t^b), \label{r4}
\end{align}
where $\gamma_t$ is  defined in Theorem \ref{th2}.
Moreover, by condition (iii) and the definition of $\lambda_{\min}(t)$, we have $\lambda_{\min}(t)=O(t^b)$.
Combining this with (\ref{r4}), we can get 
\begin{align}
	\gamma_t=O\left(\frac{\gamma_t t^b}{\lambda_{\min}(t)}\right), ~~~\frac{s_t}{\lambda_{\min}(t)}=O\left(\frac{\gamma_t t^b}{\lambda_{\min}(t)}\right).
\end{align}
By condition (ii) and  the selection of $p_t$, it can be obtained that
$\gamma_t t^b=O(\lambda^{2(p_t+1)} t^b)=O(p_t\log t)$. By condition (iii) and the definition of $r_t$ in Theorem \ref{th2}, we have $r_t=O(p_tt^b)$.
Thus by condition (i), we can see from Theorems \ref{th2} and \ref{th2'} that 
\begin{align}
	\|{\hat\Theta}_{i}(t)-\Theta\|^2
	=O\left(\frac{p_t\log t}{\lambda_{\min}(t)}\right) \text{a.s..}
\end{align}
This completes the proof of the theorem.
\end{proof}

\begin{remark}\label{re5.2}
From (\ref{r3}), we can  get the strong convergence of Algorithm \ref{algorithm}
if the following cooperative excitation condition 
\begin{align}
	\frac{p_t\log t}{\lambda_{\min}(t)}\xrightarrow{t\to\infty}0,~ \text{a.s.,}\label{e17}
\end{align}  is satisfied. We give some explanations for this condition.

(i) 
For  $p_t=[\log^{\alpha}t]$ with $\alpha>1$, the condition (\ref{e17}) is  a natural generalization of the weakest possible cooperative non-persistent excitation (PE) condition for finite dimensional regression models in \cite{Xie2021}. Notably, it is much weaker than the standard cooperative PE condition, i.e., $\liminf_{t\rightarrow\infty}\frac{\lambda_{\min}(t)}{t} > 0$. Thus, the result in \cite{Xie2021} emerges as a special case of (\ref{r3}) when $p_t$ is taken as a  fixed upper bound for the finite order of the system.
For faster-growing $p_t= [t^{\alpha_1}]$ with $0 < \alpha_1 < 1$, the strong consistency of the estimates can still be achieved under the cooperative PE condition.

(ii) 
The condition (\ref{e17}) also extends the non-cooperative excitation condition from \cite{Guo1991} and \cite{DAI2025} (i.e., $\mathcal{A} = I_{n}$) by relaxing the individual agent requirement:
\begin{gather}\label{noncoo}
	\frac{p_t\log t}{\lambda_{\min}\left(\sum^{t-1}_{k=0}\varphi_{k,i}(t)\varphi^{\top}_{k,i}(t)+\gamma I\right)}\xrightarrow{t\to\infty}0.
\end{gather}
It implies that even if any individual agent can not estimate the unknown parameter matrix accurately by the traditional non-cooperative  algorithm, the whole multi-agent system is likely to fulfill the estimation task by using the distributed  algorithm. 
\end{remark}

\subsection{Regret analysis}
Next, we analyze the accuracy of online prediction.
Note that (\ref{1}) hold  for any $t>0$ and $0\leq k\leq t$. Then  taking conditional expectations on both sides of (\ref{1}), we can obtain the best prediction for ${y}_{k+1,i}^{\top}$ of agent $i$ in the mean square sense:
\begin{gather}
\!\!\!\!\mathbb{E}({y}_{k+1,i}^{\top}|\mathscr{F}_k)\!=\!\mathbb{E}(\varphi^{\top}_{k,i}(t){\Theta(t)}\!+\!\epsilon^{\top}_{k,i}(t)|\mathscr{F}_k), ~k=0,...,t.\!\!\label{53}
\end{gather}
Replacing the unknown parameter $\Theta(t)$ in (\ref{53}) by the estimate $\Theta_{k,i}(t)$ of agent $i$ and omitting the residual term $\epsilon^{\top}_{k,i}(t)$, we can obtain the following online predictor of agent $i$ for ${y}_{k+1,i}^{\top}$:
\begin{gather}
\hat{y}_{k+1,i}^{\top}(t)=\mathbb{E}(\varphi^{\top}_{k,i}(t){\Theta_{k,i}(t)}|\mathscr{F}_k), ~k=0,...,t.\label{55}
\end{gather}
Normally, the difference between the best prediction  and the online predictor is referred to as ``regret" \cite{Xie2021}. Thus we denote the following  regret of agent $i$ at time $k$:
\begin{align*}
R_{k,i}(t)=\|\mathbb{E}({y}_{k+1,i}^{\top}|\mathscr{F}_k)-\hat{y}_{k+1,i}^{\top}(t)\|^2, ~k=0,...,t.
\end{align*}
While it may be unrealistic to expect the regret defined above to remain small for every individual $k$ due to the influence of noise, we can show that the regret is small in an average sense. Specifically, the accumulated regret from $k=0$ to $t$ is of order $o(t)$, as established in the following theorem.

\begin{theorem}\label{th3}
Consider the model (\ref{model})-(\ref{con}) and Algorithm \ref{algorithm}.  Under Assumptions \ref{a1} (i) and \ref{a2}. If $\Phi^{\top}_k(t)P_k(t)\Phi_k(t)=O(p_t)$ a.s., then the accumulated regret over networks has the following upper bound as $t\rightarrow\infty$:
\begin{align*}
	&\sum^n_{i=1}\!\sum^{t-1}_{k=0}\!R_{k,i}(t)
	\!=\\
	&\left\{
	\begin{matrix}
		\!O\!\left(p_t[s_t\!+\!p_t\log r_t]\right)\!+\!o(p_t[d(t)\log\log t]^2) &\!\!\!\text{if} ~p_t=O(\log^{\alpha}t);\\
		\!\!\!\!\!\!\!\!O\!\left(p_t[s_t\!+\!p_t\log r_t]\right)\!+\!o(p_t[d(t)\log t]^2)  & \!\!\!\!\!\!\!\!\!\!\text{if} ~p_t=O(t^{\alpha}),
	\end{matrix}
	\right.
\end{align*}
where $r_t$ is defined in Theorem \ref{th2},
and $s_t$, $d(t)$ are defined in Lemma \ref{th1}.
\end{theorem}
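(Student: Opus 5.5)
We will bound the regret by the excitation-free sum already controlled in Lemma \ref{th1}. Since $\varphi_{k,i}$ and $\Theta_{k,i}(t)$ are $\mathscr{F}_k$-measurable, the conditional expectations in (\ref{53}) and (\ref{55}) drop out. For each agent and each $k$ this gives
\begin{align*}
R_{k,i}(t)=\|\varphi^{\top}_{k,i}(t)\widetilde{\Theta}_{k,i}(t)+\epsilon^{\top}_{k,i}(t)\|^2\leq 2\|\varphi^{\top}_{k,i}(t)\widetilde{\Theta}_{k,i}(t)\|^2+2\|\epsilon_{k,i}(t)\|^2 .
\end{align*}
The second term sums over $i$ and $k\leq t-1$ to exactly $2s_t$. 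This is already of the claimed order, since $p_t\geq 1$.

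For the first term we insert the weights $b_{k,i}(t)$. Writing $\|\varphi^{\top}_{k,i}\widetilde{\Theta}_{k,i}\|^2=b^{-1}_{k,i}(t)\cdot b_{k,i}(t)\|\varphi^{\top}_{k,i}\widetilde{\Theta}_{k,i}\|^2$, we note that $b^{-1}_{k,i}(t)=1+\varphi^{\top}_{k,i}(t)P_{k,i}(t)\varphi_{k,i}(t)$. This quantity is a diagonal entry of $I_n+\Phi^{\top}_k(t)P_k(t)\Phi_k(t)$, and the hypothesis $\Phi^{\top}_kP_k\Phi_k=O(p_t)$ makes it $O(p_t)$ uniformly in $k$ and $i$. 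Hence
\begin{align*}
\sum^n_{i=1}\sum^{t-1}_{k=0}\|\varphi^{\top}_{k,i}(t)\widetilde{\Theta}_{k,i}(t)\|^2=O(p_t)\sum^n_{i=1}\sum^{t-1}_{k=0}b_{k,i}(t)\|\varphi^{\top}_{k,i}(t)\widetilde{\Theta}_{k,i}(t)\|^2 .
\end{align*}

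We then apply Lemma \ref{th1}. Its left side is the sum of two terms; the first, $\operatorname{tr}(V_t(t))$, is nonnegative, and the second carries the coefficient $\frac{1}{2}+o(1)$, which is eventually bounded below by $\frac14$. So the weighted sum above is $O(s_t)+O(p_t\log^+\lambda_{\max}(P^{-1}_t(t)))+o([d(t)\log\log t]^2)$.

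We read Lemma \ref{th1} in its summed form over $i$, i.e.\ with $\widetilde{\Theta}^{\top}_k\Phi_kb_k\Phi^{\top}_k\widetilde{\Theta}_k$ as in Lemma \ref{lemma3}. This is the main point to verify, since (\ref{2}) is printed with a single index $i$. If only a per-agent statement is available, summing over the $n$ agents costs only a constant factor.

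By (\ref{51}), $\lambda_{\max}(P^{-1}_t(t))=O(r_t)$, so the logarithmic term becomes $O(p_t\log r_t)$. Multiplying by the factor $O(p_t)$ and adding $2s_t$ yields $O(p_t[s_t+p_t\log r_t])+o(p_t[d(t)\log\log t]^2)$. This is the case $p_t=O(\log^{\alpha}t)$.

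For $p_t=O(t^{\alpha})$ we repeat the argument verbatim, using Remark \ref{re4.2} in place of Lemma \ref{th1}; this replaces $\log\log t$ by $\log t$.

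The expected obstacle is the careful use of Lemma \ref{th1}: the $o(1)$ in its coefficient and the $\log^+$ must be handled so that all bounds hold uniformly almost surely. Beyond that, every step is a direct combination of results already stated in the paper.
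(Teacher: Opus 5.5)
Your proposal is correct and follows the paper's proof essentially line by line: the same split $R_{k,i}(t)\le 2\|\varphi^{\top}_{k,i}(t)\widetilde{\Theta}_{k,i}(t)\|^2+2\|\epsilon_{k,i}(t)\|^2$, the same insertion of $b^{-1}_{k,i}(t)b_{k,i}(t)$ with $b^{-1}_{k,i}(t)=O(p_t)$, and then Lemma \ref{th1} with (\ref{51}) for $p_t=O(\log^{\alpha}t)$, or Remark \ref{re4.2} for $p_t=O(t^{\alpha})$. Your worry about the single index $i$ in (\ref{2}) is settled by the appendix proof, which controls $\sum_{i=1}^n h^2_{t,i}$; this is a maximum over $p_l\le p_t$ that includes $l=t$, so it dominates $\sum_{i=1}^n\sum_{k=0}^{t-1}b_{k,i}(t)\|\varphi^{\top}_{k,i}(t)\widetilde{\Theta}_{k,i}(t)\|^2$.
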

\begin{proof}
By Assumption \ref{a2}, it is clear that $\varphi_{k,i}(t), \epsilon^{\top}_{k,i}(t)$ and $\Theta_{k,i}(t)$ is $\mathscr{F}_k$-measurable. Thus from (\ref{53}) and (\ref{55}), we can obtain that
\begin{align}
	&\sum^n_{i=1}\sum^{t-1}_{k=0}R_{k,i}(t)=\sum^n_{i=1}\sum^{t-1}_{k=0}\|\mathbb{E}({y}_{k+1,i}^{\top}|\mathscr{F}_k)-\hat{y}_{k+1,i}^{\top}(t)\|^2\nonumber\\
	=&\sum^n_{i=1}\sum^{t-1}_{k=0}\|\varphi^{\top}_{k,i}(t){\widetilde\Theta_{k,i}(t)}+\epsilon^{\top}_{k,i}(t)\|^2\nonumber\\
	\leq& 2\sum^n_{i=1}\sum^{t-1}_{k=0}\|\varphi^{\top}_{k,i}(t){\widetilde\Theta_{k,i}(t)}\|^2+2\sum^n_{i=1}\sum^{t-1}_{k=0}\|\epsilon_{k,i}(t)\|^2.\label{56}
\end{align}
By the definition of $b_{k,i}(t)$ in Algorithm \ref{algorithm} and the condition $\Phi^{\top}_k(t)P_k(t)\Phi_k(t)=O(p_t)$, we know that 
\begin{align}
	b^{-1}_{k,i}(t)=1+\varphi^{\top}_{k,i}(t)P_{k,i}(t)\varphi_{k,i}(t)=O(p_t).\label{57}
\end{align}
Combining (\ref{57}) with Lemma \ref{th1} and (\ref{51}), we have
\begin{align}
	&\sum^n_{i=1}\sum^{t-1}_{k=0}\|\varphi^{\top}_{k,i}(t){\widetilde\Theta_{k,i}(t)}\|^2\nonumber\\
	=&{\color{red}\sum^n_{i=1}\sum^{t-1}_{k=0}}b^{-1}_{k,i}(t)b_{k,i}(t)\|\varphi^{\top}_{k,i}(t)\widetilde{\Theta}_{k,i}(t)\|^2\nonumber\\
	=&o(p_t[d(t)\log\log t]^2)+O\left(p_t[s_t+p_t\log r_t]\right), ~{\rm a.s..}\label{58}
\end{align}
Also by  Remark \ref{re4.2}, we have
\begin{align}
	&\sum^n_{i=1}\sum^{t-1}_{k=0}\|\varphi^{\top}_{k,i}(t){\widetilde\Theta_{k,i}(t)}\|^2\nonumber\\
	=&o(p_t[d(t)\log t]^2)+O\left(p_t[s_t+p_t\log r_t]\right), ~{\rm a.s..}\label{581}
\end{align}
Thus by the definition of $s_t$ in Lemma \ref{th1}, (\ref{56}), (\ref{58}) and \eqref{581}, we can obtain the desired result. This completes the proof of the theorem.
\end{proof}

Note that   $\lambda_{\max}\big(P_{k+1,i}(t)\big) \!=\!\frac{1}{\lambda_{\min}\big(P^{-1}_{k+1,i}(t)\big)}\!\leq\!\beta$ 
holds for all $k$ and $i$ by  (\ref{p}). Then the condition $\Phi^{\top}_k(t)P_k(t)\Phi_k(t) = O(p_t)$ can be satisfied when $\left\{\varphi_{k,i}^{[q]}\right\}_{k\geq 1}$ is bounded for every $q \geq 1$ and $i\in\{1,...,n\}$. 

From Theorem \ref{th3}, one can immediately deduce the following corollary.
\begin{corollary}\label{corollary1}
Let the conditions of Theorem \ref{th3} hold. If the conditions (i), (ii) and (iii) in Theorem \ref{re5.1} are also satisfied, then we have as  $t\rightarrow\infty$:
$$\sum^n_{i=1}\sum^{t-1}_{k=0}\!R_{k,i}(t)=O(p^2_t\log t).$$
\end{corollary}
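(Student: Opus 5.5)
The plan is to plug the bounds already obtained in the proof of Theorem \ref{re5.1} into the regret bound of Theorem \ref{th3}, handling the two admissible choices of $p_t$ separately. Theorem \ref{th3} gives
$$\sum^n_{i=1}\sum^{t-1}_{k=0}R_{k,i}(t)=O\left(p_t[s_t+p_t\log r_t]\right)+o(p_t[d(t)\log\log t]^2)$$
for $p_t=[\log^\alpha t]$. For $p_t=[t^{\alpha_1}]$ the same holds with $\log\log t$ replaced by $\log t$. So it suffices to show three things:
\begin{itemize}
\item $s_t=O(p_t\log t)$;
\item $\log r_t=O(\log t)$;
\item $[d(t)\log\log t]^2$ (respectively $[d(t)\log t]^2$) is $O(p_t\log t)$.
\end{itemize}

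\textbf{The residual term $s_t$.} I will reuse (\ref{r4}) from the proof of Theorem \ref{re5.1}, which gives $s_t=O(\gamma_t t^b)$. Condition (ii) gives $\gamma_t=O(\lambda^{2p_t})$. Hence $\gamma_t t^b=O(\lambda^{2p_t}t^b)$. This quantity is $o(1)$ once $p_t\geq c\log t$ with $c>b/(2\log(1/\lambda))$. That holds for $p_t=[\log^\alpha t]$ with $\alpha>1$, and for $p_t=[t^{\alpha_1}]$. In particular $s_t=O(p_t\log t)$.

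\textbf{The logarithmic term $\log r_t$.} Condition (iii) gives $r_t=O(p_tt^b)$. Since $p_t=O(t)$, we get $\log r_t=O(\log t)$. Consequently
$$p_t\cdot p_t\log r_t=O(p_t^2\log t).$$

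\textbf{The noise term.} I need to choose $d_i(t)$ in Assumption \ref{a2}. Condition (i) says $\|w_{k,i}\|^2=O(\log k)$, so I can take $d_i(t)=\log t$ times any slowly increasing factor. For example $d_i(t)=\log t\cdot\log\log t$ works: it makes $\|w_{k,i}\|^2=o(d_i(k))$, and $\sup_k d_i(e^{k+1})/d_i(e^k)<\infty$ holds. With this choice the noise term $o(p_t[d(t)\log\log t]^2)$ becomes $o(p_t\log^2t\cdot(\log\log t)^4)$.

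This is where I expect the main obstacle. The target $O(p_t^2\log t)$ does not obviously absorb $\log^2 t$ polylog factors when $p_t=[\log^\alpha t]$ with $1<\alpha<1+$ small. The paper's proof of Theorem \ref{re5.1} asserted that under condition (i) the noise contribution is of order $\log t$. That reading implicitly uses Lemma \ref{lemma1} rather than Lemma \ref{lemma2}, so that $d(t)\log\log t$ contributes only linearly, not squared. I would first try to follow that route and show the noise contribution is $O(p_t\log t)$, which is dominated by $p_t^2\log t$. If that reading fails, I would fall back on the crude bound, which is still fine for $p_t=[t^{\alpha_1}]$ (there $\log^2 t\,(\log\log t)^4=o(p_t)$) and for $p_t=[\log^\alpha t]$ with $\alpha$ large enough.

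\textbf{Combining.} Adding the three pieces gives
$$O(p_t\cdot p_t\log t)+O(p_t^2\log t)=O(p_t^2\log t),$$
plus the lower-order $2s_t$ term from (\ref{56}). This is the claimed bound.
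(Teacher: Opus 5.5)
Your route is the one the paper intends when it says the corollary follows immediately from Theorem \ref{th3}: reuse (\ref{r4}) to get $s_t=O(\gamma_t t^b)=o(1)$, use condition (iii) to get $\log r_t=O(\log t)$, and then absorb the noise term. Your first two steps are fine. The gap is the noise term. You leave it open for $p_t=[\log^{\alpha}t]$ with $\alpha$ near $1$, and your preferred way out rests on a misreading. The noise cannot be made to enter ``linearly'' via Lemma \ref{lemma1}. The cross term $\sum_{i}\sum_{k}b_{k,i}(t)\varphi^{\top}_{k,i}(t)\widetilde{\Theta}_{k,i}(t)w_{k+1,i}$ has random, unbounded weights, so only Lemma \ref{lemma2} applies, as in (\ref{35}). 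Its bound $o(h_{t,i}d_i(t)\log\log t)$ then enters an inequality of the form $h^2\le o(hx)+y$, which only yields $h^2=o(x^2)+O(y)$; so the noise contribution is of order $[d(t)\log\log t]^2$. The proof of Theorem \ref{re5.1} also does not claim a noise contribution of order $\log t$. It only needs $o([d(t)\log\log t]^2)=O(p_t\log t)$.

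That crude bound already works for every $\alpha>1$, so no alternative route is needed. Take $d_i(t)=\log t\,\log\log t$. Then the noise term of Theorem \ref{th3} is $o(p_t\log^2t\,(\log\log t)^4)$. Since $p_t\geq\frac{1}{2}\log^{\alpha}t$ for large $t$,
\[
\frac{\log^2t\,(\log\log t)^4}{p_t\log t}\leq\frac{2(\log\log t)^4}{(\log t)^{\alpha-1}}\longrightarrow 0
\]
for any $\alpha>1$, because any positive power of $\log t$ dominates any power of $\log\log t$. Hence the noise term is $o(p_t^2\log t)$. This comparison, together with the estimate of $\gamma_t t^b$, is why Theorem \ref{re5.1} requires $\alpha>1$ rather than $\alpha>0$.

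For $p_t=[t^{\alpha_1}]$ the relevant term is $o(p_t[d(t)\log t]^2)=o(p_t\log^4t\,(\log\log t)^2)$, not the expression you wrote. It is still $o(p_t^2\log t)$, since polylogarithmic factors are $o(t^{\alpha_1})$. With this comparison in place of your ``main obstacle'' paragraph, your proof is complete and matches the paper's.
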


We remark that the accumulated regret can achieve an order of $o(t)$ if
$p_t = [\log^\alpha t ]$ with $\alpha > 1$, or
$p_t = [ t^{\alpha_1} ]$ with $0 < \alpha_1 < \frac{1}{2}$, meaning that the averaged regret goes to zero.

To  conveniently extend the above results to distributed adaptive control problems in future work, 
we next derive the  upper bound for the  accumulated ``synchronized regret".
\begin{theorem}\label{th4}
Consider the model (\ref{model})-(\ref{con}) and Algorithm \ref{algorithm} with $p_t=[\log^\alpha t ]$ with $\alpha > 1$. 
Under Assumptions \ref{a1} (i) and \ref{a2}. If the  conditions (i), (ii) in Theorem \ref{re5.1} are  satisfied and $\left\{\varphi_{k,i}^{[q]}\right\}_{k\geq 1}$ is bounded for every $q \geq 1$ and $i\in\{1,...,n\}$,  then the following results holds as $t\rightarrow\infty$:
$$\sum^n_{i=1}\!\sum^{t-1}_{k=0}\!R_{k,i}(k)=O(p^3_t\log t).$$
\end{theorem}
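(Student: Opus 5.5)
Throughout, the regret at time $k$ is measured with the estimate run at dimension $p_k$, i.e.\ $R_{k,i}(k)=\|\varphi^{\top}_{k,i}(k)\widetilde\Theta_{k,i}(k)+\epsilon^{\top}_{k,i}(k)\|^2$. The difficulty is that the dimension used in each summand varies with $k$, so Theorem~\ref{th3} cannot be applied to a single run. I would handle this by blocking on the values of $p_k$. Since $p_k=[\log^\alpha k]$ is non-decreasing and integer valued, it takes at most $p_t$ distinct values on $\{1,\dots,t\}$. Let $\tau_1<\tau_2<\cdots<\tau_J=t$ be the switching times, so that $p_k\equiv p_{\tau_j}$ for $\tau_{j-1}\le k<\tau_j$, and note $J\le p_t+1$.

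On block $j$ the dimension is frozen at $p_{\tau_j}$. As in (\ref{56}), the first step is the split
\[
R_{k,i}(k)\le 2\|\varphi^{\top}_{k,i}(\tau_j)\widetilde\Theta_{k,i}(\tau_j)\|^2+2\|\epsilon_{k,i}(\tau_j)\|^2 .
\]
Here I am using that $\Theta_{k,i}(k)$ depends on $k$ only through the truncation level $p_k$: Algorithm~\ref{algorithm} with dimension $p$ produces iterates that do not depend on the horizon. Hence $\Theta_{k,i}(k)=\Theta_{k,i}(\tau_j)$ within the block.

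The next step is to apply Lemma~\ref{th1}, in the form (\ref{58}), with the fixed index $l=\tau_j$ and horizon $\tau_j$. Together with $b^{-1}_{k,i}=O(p_{\tau_j})$, this gives
\[
\sum_{k<\tau_j}\sum_i\|\varphi^{\top}_{k,i}\widetilde\Theta_{k,i}\|^2=O\bigl(p_{\tau_j}[s_{\tau_j}+p_{\tau_j}\log r_{\tau_j}]\bigr)+o\bigl(p_{\tau_j}[d(\tau_j)\log\log\tau_j]^2\bigr).
\]
The bound on $b^{-1}_{k,i}$ comes from boundedness of each coordinate $\varphi^{[q]}_{k,i}$ and $\lambda_{\max}(P)\le\beta$, as noted after Theorem~\ref{th3}. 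Summing over a longer range only enlarges the left side, so the block portion is also controlled.

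Each term is then bounded using the hypotheses:
\begin{itemize}
\item Bounded regressors give $r_{\tau_j}=O(p_t t)$, so $\log r_{\tau_j}=O(\log t)$.
\item Condition (i) lets us take $d_i(t)=\log t$, or something slightly larger. The double-array term then contributes $o(p_t\log^2 t\,(\log\log t)^2)$, which I expect to be absorbed once $d$ is chosen appropriately, as in Theorem~\ref{re5.1}.
\item Condition (ii) gives $\gamma_{\tau}=O(\lambda^{2p_\tau})$, and by (\ref{r4}) with $b=1$ we get $s_\tau=O(\lambda^{2p_\tau}\tau)$. With $p_\tau=[\log^\alpha\tau]$ and $\alpha>1$, this is $o(1)$.
\end{itemize}
The residual sums $\sum\|\epsilon_{k,i}\|^2$ are controlled in the same way. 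Each block therefore contributes $O(p_t^2\log t)$.

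Summing over the $J\le p_t+1$ blocks gives the claimed $O(p_t^3\log t)$.

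The main obstacle is the second step, the legitimacy of applying Lemma~\ref{th1} at intermediate horizons $\tau_j$ uniformly in $j$. The double-array martingale bounds of Lemmas~\ref{lemma1}--\ref{lemma2} already contain $\max_{1\le j\le t}$ and $\max_{m\le p_t}$, which should deliver the required uniformity. The remaining care is in making sure the $o(\cdot)$ noise term fits inside $p_t^2\log t$ per block. If it does not, I would fall back to recording it as a separate additive term.
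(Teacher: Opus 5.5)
Your proposal is correct and follows essentially the paper's own argument. You split off the residual as in (\ref{n65}), partition time into intervals on which $p_k$ is constant (the paper uses $q_j=[e^{j^{1/\alpha}}]$), identify $\Theta_{k,i}(k)$ with the iterate of the fixed-dimension run ending at the block endpoint as in (\ref{69}), apply (\ref{58}) at that horizon to get $O(p_t^2\log t)$ per block, and sum over $O(p_t)$ blocks. The uniformity you flag as the main obstacle is automatic, since Lemma~\ref{th1} holds almost surely with one random constant for all horizons. The noise term you hedged on is also absorbed: taking $d_i(t)=\log t\log\log t$, $\alpha>1$ gives $p_t[d(t)\log\log t]^2=O(\log^{\alpha+2}t\,(\log\log t)^4)=o(p_t^2\log t)$.
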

\begin{proof}
Similar to (\ref{56}), we have
\begin{align}
	&\sum^n_{i=1}\sum^{t-1}_{k=0}R_{k,i}(k)
	=\sum^n_{i=1}\sum^{t-1}_{k=0}\|\varphi^{\top}_{k,i}(k){\widetilde\Theta_{k,i}(k)}+\epsilon^{\top}_{k,i}(k)\|^2\nonumber\\
	\leq& 2\sum^n_{i=1}\sum^{t-1}_{k=0}(\varphi^{\top}_{k,i}(k){\widetilde\Theta_{k,i}(k)})^2+2\sum^n_{i=1}\sum^{t-1}_{k=0}\|\epsilon_{k,i}(k)\|^2.\label{n65}
\end{align}
By (\ref{58}) and the properties of $w_{k,i}$, $\Theta^{[q]}$ and $p_t$ in Theorem \ref{re5.1}, we have for any $t\geq 1$
\begin{gather}
	\!\!\!\!\sum^n_{i=1}\sum^{t-1}_{k=0}\!\|\varphi^{\top}_{k,i}(t){\widetilde\Theta_{k,i}(t)}\|^2\!=\!O\left(p_t^2\log t\right)\!=\!O\left(\log^{2\alpha +1} t\right).\label{68}
\end{gather}

By the definition of $p_k$, we know that  $p_k=j$ is equivalent to 
$j\leq\log^\alpha k<j+1.
$
This implies that   $k\in[e^{j^{1/{\alpha}}}, e^{(j+1)^{1/{\alpha}}})$ with $e$ being the natural base  of the logarithm.
We denote the integer function $q_{j}=[e^{j^{1/{\alpha}}}]$. There exists a suitably large integer $j_0>0$ such that for any $j\geq j_0$, $q_{j+1}\in[e^{j^{1/{\alpha}}}, e^{(j+1)^{1/{\alpha}}})$, which implies $p_{q_{j+1}}=j$.

Then by the definition of $\varphi_{k,i}(k)$ and $\widetilde\Theta_{k,i}(k)$, we have for $k\in[q_{j}+1,q_{j+1}]\subset[e^{j^{1/{\alpha}}}, e^{(j+1)^{1/{\alpha}}})$, $j\geq j_0$, 
\begin{align}
	\varphi^{\top}_{k,i}(k){\widetilde\Theta_{k,i}(k)}	=&\varphi^{\top}_{k,i}(1:p_k){\widetilde\Theta_{k,i}(1:p_k)}\nonumber\\
	=&\varphi^{\top}_{k,i}(1:j){\widetilde\Theta_{k,i}(1:j)}\nonumber\\
	=&\varphi^{\top}_{k,i}(1:p_{q_{j+1}}){\widetilde\Theta_{k,i}(1:p_{q_{j+1}})}\nonumber\\
	=&	\varphi^{\top}_{k,i}(q_{j+1}){\widetilde\Theta_{k,i}(q_{j+1})}\label{69},
\end{align}
where $\varphi_{k,i}(1:m)\triangleq \text{col}\{\varphi_{k,i}^{[1]},\cdots,\varphi_{k,i}^{[m]}\}$ and $\widetilde\Theta_{k,i}(1:m)\triangleq \text{col}\{\widetilde\Theta_{k,i}^{[1]},\cdots, \widetilde\Theta_{k,i}^{[m]}\}$.

Now, for any $t>k_0$, there exists a positive integer $z$ such that $q_z<t\leq q_{z+1}$, we obtain by (\ref{68}) and (\ref{69})
\begin{align}
	&\!\!\!\!\!\sum^{t-1}_{k=k_0}\|\varphi^{\top}_{k,i}(k){\widetilde\Theta_{k,i}(k)}\|^2	\!=\!\sum^{z}_{j=j_0}\sum^{q_{j+1}}_{k=q_j+1}\!\!\!\|\varphi^{\top}_{k,i}(q_{j+1}){\widetilde\Theta_{k,i}(q_{j+1})}\|^2\nonumber\\
	=&O\left(\sum^{z}_{j=0}\log^{2\alpha +1} (q_{j+1})\!\!\right)
	\!\!\!=\!\!
	O\!\left((z+1)\log^{2\alpha +1} (q_{z+1})\right),\label{70}
\end{align}
where $k_0\geq q_{j_0}+1$.
By the definition of $q_z$, we know that $e^{z^{1/{\alpha}}}-1<q_{z}<t$, which implies that $z+1=O(p_t)$. Then by $q_{z+1}=O(q_z)=O(t)$ and (\ref{70}), it is clear that
\begin{align}
	\sum^{t-1}_{k=0}\|\varphi^{\top}_{k,i}(k){\widetilde\Theta_{k,i}(k)}\|^2=O(p^3_t\log t).\label{71}
\end{align}
By the condition that $\left\{\varphi_{k,i}^{[q]}\right\}_{k\geq 1}$ is bounded for every $q \geq 1$ and $i\in\{1,...,n\}$ and (ii) in Theorem \ref{re5.1},  we have 
\begin{align}
	&\sum^n_{i=1}\sum^{t-1}_{k=0}\|\epsilon_{k,i}(k)\|^2
	=\sum^n_{i=1}\sum^{t-1}_{k=0}\left\|\sum^{\infty}_{q=p_k+1}\varphi^{[q]}_{k,i}\Theta^{[q]}\right\|^2\nonumber\\
	\leq&\sum^n_{i=1}\sum^{t-1}_{k=0}\left(\sum^{\infty}_{q=p_k+1}\left\|\varphi^{[q]}_{k,i}\right\|\left\|\Theta^{[q]}\right\|\right)^2\nonumber\\
	=&O\left(\sum^n_{i=1}\sum^{t-1}_{k=0}\left(\sum^{\infty}_{q=p_k+1}\lambda^q\right)^2\right)
	\nonumber\\
	=&O\left(\sum^{t-1}_{k=0}\lambda^{2(p_k+1)}\right)=O(1). \label{72}
\end{align}
Thus by (\ref{n65}), (\ref{71}) and (\ref{72}), we can get the desired result. This completes the proof of the theorem.
\end{proof}

From  Theorem \ref{th3}, Corollary \ref{corollary1} and Theorem \ref{th4}, we know that our regret bounds are derived without requiring any excitation conditions on
the regression signals.
This implies that the prediction performance may not necessarily depend on estimation accuracy.

\section{Concluding Remarks}\label{section_conclusion}
This paper proposed a distributed least squares algorithm with increasing dimensions to estimate the infinite unknown parameters of a stochastic regression model. 
We analyzed the convergence properties of our algorithm under a cooperative excitation condition for the random regression vectors, without requiring assumptions of independence, stationarity, or ergodicity. This relaxation enables our theoretical results to be applicable to the feedback systems.  Moreover, we established an upper bound of the accumulated regret  without any excitation conditions. Our work generalizes prior results for finite-dimensional stochastic regression models and extends those for the single-agent case to a distributed setting.
Future research includes investigating the privacy protection and algorithm complexity problems in distributed learning for large models, exploring the combination of distributed adaptive estimation with distributed control, and investigating the estimation and prediction of nonlinear systems such as multi-layer models.

{\appendices 
	\section{Proof of Lemma \ref{lemma3}}\label{proof:Lemma4.2}
	The following lemma  establishes the relationship  on the convex
	combination of nonnegative definite matrices.
	\begin{lemma}  \cite{Xie2021} \label{LemmCoro41}
		For any adjacency matrix $\mathcal{A}_k=\{a^{(k)}_{ij}\}\in\mathbb{R}^{n\times n}$ of a weight-balanced digraph, denote $\mathscr{A}_k(t)=\mathcal{A}_k\otimes I_{p_t}$. Then for any $k\geq 1$, we have
		\begin{align*}
			\mathscr{A}^{\top}_k(t)\bar{{P}}_{k+1}^{-1}(t)\mathscr{A}_k(t)\leq {P}_{k+1}^{-1}(t),
		\end{align*}
		and
		\begin{align*}
			\mathscr{A}^{\top}_k(t){P}_{k+1}(t)\mathscr{A}_k(t)\leq \bar{{P}}_{k+1}(t).
		\end{align*}
	\end{lemma}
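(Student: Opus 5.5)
The plan is to reduce both matrix inequalities to statements about quadratic forms in a stacked vector $x=\text{col}\{x_1,\dots,x_n\}$, $x_i\in\mathbb{R}^{p_t}$. Two facts would be combined: convexity of quadratic forms, and the two normalizations of $\mathcal{A}_k$. Here weight-balancedness means $\sum_j a^{(k)}_{ij}=1$ and $\sum_i a^{(k)}_{ij}=1$, and $a^{(k)}_{ij}=0$ for $j\notin\mathcal{N}_{i,k}$.

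By the block structure of $\mathscr{A}_k(t)=\mathcal{A}_k\otimes I_{p_t}$ and of the block-diagonal matrix $\bar P^{-1}_{k+1}(t)$,
$$x^{\top}\mathscr{A}^{\top}_k(t)\bar P^{-1}_{k+1}(t)\mathscr{A}_k(t)x=\sum_{i=1}^n\Big(\sum_j a^{(k)}_{ij}x_j\Big)^{\top}\bar P^{-1}_{k+1,i}(t)\Big(\sum_j a^{(k)}_{ij}x_j\Big).$$
\begin{enumerate}
\item Since row $i$ is a probability vector, Jensen's inequality for the convex map $z\mapsto z^{\top}Mz$ (with $M\ge0$) bounds each summand by $\sum_j a^{(k)}_{ij}x_j^{\top}\bar P^{-1}_{k+1,i}(t)x_j$.
\item Exchanging the order of summation gives the upper bound $\sum_j x_j^{\top}\big(\sum_i a^{(k)}_{ij}\bar P^{-1}_{k+1,i}(t)\big)x_j$.
\item This must then be compared with $x^{\top}P^{-1}_{k+1}(t)x=\sum_j x_j^{\top}\big(\sum_l a^{(k)}_{jl}\bar P^{-1}_{k+1,l}(t)\big)x_j$, which comes from Step 2 of Algorithm~\ref{algorithm}.
\end{enumerate}

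For the second inequality I would use operator convexity of $X\mapsto X^{-1}$ on positive definite matrices. Applied to $P^{-1}_{k+1,i}(t)=\sum_j a^{(k)}_{ij}\bar P^{-1}_{k+1,j}(t)$, it gives $P_{k+1,i}(t)\le\sum_j a^{(k)}_{ij}\bar P_{k+1,j}(t)$. I would then run the same Jensen argument as above on $x^{\top}\mathscr{A}^{\top}_k(t)P_{k+1}(t)\mathscr{A}_k(t)x$. An alternative route is a Schur-complement argument: pass from the first inequality to $\begin{pmatrix}P^{-1}_{k+1}(t)&\mathscr{A}^{\top}_k(t)\\ \mathscr{A}_k(t)&\bar P_{k+1}(t)\end{pmatrix}\ge0$ and take the complementary Schur complement.

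\textbf{Main obstacle: index matching.} After Jensen, the weights attached to $\bar P^{-1}_{k+1,\cdot}$ are column weights $\sum_i a^{(k)}_{ij}$. The matrix $P^{-1}_{k+1,j}$, however, carries row weights $\sum_l a^{(k)}_{jl}$. Likewise, the Schur-complement route naturally yields the transposed form $\mathscr{A}_k P_{k+1}\mathscr{A}^{\top}_k\le\bar P_{k+1}$, not the stated one. Weight-balancedness alone does not close this gap for nonsymmetric $\mathcal{A}_k$: in the scalar case with $\mathcal{A}_k$ a cyclic permutation, the two sides are $\sum_i d_{\sigma^{-1}(i)}x_i^2$ and $\sum_i d_{\sigma(i)}x_i^2$, which differ in general.

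So at this step I would first check the case where $\mathcal{A}_k$ is symmetric; there $\sum_i a^{(k)}_{ij}=\sum_i a^{(k)}_{ji}$ termwise and the argument closes exactly. For the general weight-balanced digraph, I would follow \cite{Xie2021} and verify which orientation of the Kronecker/neighbor convention makes the weights coincide. If the stated orientation cannot be matched, I would record that the proof as planned only delivers the inequalities under symmetric $\mathcal{A}_k$, or for the transposed orientation.
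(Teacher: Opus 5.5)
Your suspicion about the first inequality is justified. As printed, it is false for nonsymmetric weight-balanced digraphs, even with self-loops. Take $p_t=1$, $n=3$, $\mathcal{A}_k=\frac12(I_3+\Pi)$ with $(\Pi x)_i=x_{i+1}$ (indices mod $3$), and $\bar P^{-1}_{k+1}(t)=\text{diag}(M,1,1)$, which is reachable by choosing the regressors. For $x=e_2$ one gets $x^{\top}\mathscr{A}^{\top}_k\bar P^{-1}_{k+1}\mathscr{A}_kx=(M+1)/4$, but $x^{\top}P^{-1}_{k+1}x=1$, so the inequality fails for $M>3$. The paper offers no proof of its own. It imports the lemma from \cite{Xie2021}, which, as far as I recall, works with symmetric weights; there your Jensen computation closes exactly and the two orientations coincide.

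The genuine gap is in the second inequality. It is the only one actually needed later (in the proof of Lemma \ref{lemma3}, right after (\ref{v1})), and it is true as printed for every weight-balanced digraph. Your operator-convexity route cannot prove it, even for symmetric $\mathcal{A}_k$. After $P_{k+1,i}\le\sum_j a^{(k)}_{ij}\bar P_{k+1,j}$ and Jensen, the weight on $x_j$ is $\sum_l(\mathcal{A}^{\top}_k\mathcal{A}_k)_{jl}\bar P_{k+1,l}$, an average that is not dominated by $\bar P_{k+1,j}$. For instance, with $n=2$, all weights $\frac12$, $p_t=1$ and $\bar P_{k+1,j}=\bar p_j$, even using the exact $P_{k+1,i}$ you would need the harmonic mean of $\bar p_1,\bar p_2$ to be at most $\min\{\bar p_1,\bar p_2\}$.

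Your Schur-complement route is the right tool, but you feed it the wrong orientation. The transposed inequality $\mathscr{A}_kP_{k+1}\mathscr{A}^{\top}_k\le\bar P_{k+1}$ you would fall back on is equivalent to the false first inequality. The missing idea is to apply Jensen to $y=\mathscr{A}^{\top}_k(t)x$ instead. Its blocks $y_i=\sum_j a^{(k)}_{ji}x_j$ are convex combinations because the column sums equal $1$; this is exactly where weight-balancedness enters. After exchanging the sums, the weight on $x_j$ is $\sum_i a^{(k)}_{ji}\bar P^{-1}_{k+1,i}(t)=P^{-1}_{k+1,j}(t)$, an exact match with Step 2 of Algorithm \ref{algorithm}. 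This gives $\mathscr{A}_k(t)\bar P^{-1}_{k+1}(t)\mathscr{A}^{\top}_k(t)\le P^{-1}_{k+1}(t)$, i.e. $\begin{pmatrix}P^{-1}_{k+1}(t)&\mathscr{A}_k(t)\\ \mathscr{A}^{\top}_k(t)&\bar P_{k+1}(t)\end{pmatrix}\ge0$. The complementary Schur complement of this block matrix is precisely the stated $\mathscr{A}^{\top}_k(t)P_{k+1}(t)\mathscr{A}_k(t)\le\bar P_{k+1}(t)$, not its transpose. So the correct lemma for weight-balanced digraphs is the transposed first inequality together with the second inequality as printed. Your fallback to symmetric $\mathcal{A}_k$ gives up a statement that holds in general.
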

	
	Based on Lemma \ref{LemmCoro41}, we give the proof of Lemma \ref{lemma3} as follows.
	
	\begin{proof}
		By (\ref{error_equation}), we have
		\begin{align}
			V_{k+1}(l)=&\widetilde{\Theta}^{\top}_{k+1}(l){P}_{k+1}^{-1}(l)\widetilde{\Theta}_{k+1}(l)\nonumber\\
			=&\Big(\widetilde{\Theta}^{\top}_{k}(l){P}_{k}^{-1}(l)\mathscr{A}^{\top}_k(l){P}_{k+1}(l)-(W_{k+1}+\epsilon_{k}(l))\nonumber\\
			&~~\cdot\Phi^{\top}_{k}(l){P}_{k}(l)c_k(l){\bar P}^{-1}_{k+1}(l)\mathscr{A}^{\top}_k(l)P_{k+1}(l)
			\Big)\nonumber\\
			&\cdot\Big( {\mathscr{A}_k}(l){P}^{-1}_{k}(l)\widetilde{{\Theta}}_{k}(l)-{\mathscr{A}_k}(l){\bar P}^{-1}_{k+1}(l)\nonumber\\
			&~~~\cdot c_k(l){P}_{k}(l)\Phi_{k}(l)\left(W^{\top}_{k+1}+\epsilon^{\top}_{k}(l)\right)\Big)
			\nonumber\\
			=&\Big(\widetilde{\Theta}^{\top}_{k}(l){P}_{k}^{-1}(l)-(W_{k+1}+\epsilon_{k}(l))\nonumber\\
			&~~\cdot\Phi^{\top}_{k}(l){P}_{k}(l)c_k(l){\bar P}^{-1}_{k+1}(l)
			\Big)\mathscr{A}^{\top}_k(l){P}_{k+1}(l){\mathscr{A}_k}(l)\nonumber\\
			&\cdot\Big( {P}^{-1}_{k}(l)\widetilde{{\Theta}}_{k}(l)-{\bar P}^{-1}_{k+1}(l)\nonumber\\
			&~~~\cdot c_k(l){P}_{k}(l)\Phi_{k}(l)\left(W^{\top}_{k+1}+\epsilon^{\top}_{k}(l)\right)\Big).
			\label{v1}
		\end{align}	
		Then by (\ref{v1}) and Lemma \ref{LemmCoro41}, we have
		\begin{align}
			V_{k+1}(l)	\leq&\Big(\widetilde{\Theta}^{\top}_{k}(l){P}_{k}^{-1}(l)-(W_{k+1}+\epsilon_{k}(l))\nonumber\\
			&~~\cdot\Phi^{\top}_{k}(l){P}_{k}(l)c_k(l){\bar P}^{-1}_{k+1}(l)
			\Big){\bar P}_{k+1}(l)\nonumber\\
			&\cdot\Big( {P}^{-1}_{k}(l)\widetilde{{\Theta}}_{k}(l)-{\bar P}^{-1}_{k+1}(l)c_k(l){P}_{k}(l)\Phi_{k}(l)\nonumber\\
			&~~~\cdot \left(W^{\top}_{k+1}+\epsilon^{\top}_{k}(l)\right)\Big).\nonumber
		\end{align}
		Thus it can be obtained by taking the trace operation on both sides of the above equation, 
		\begin{align}	
			&\operatorname{tr}(V_{k+1}(l))\nonumber\\
			\leq&	\operatorname{tr}\left(\widetilde{\Theta}^{\top}_{k}(l){P}_{k}^{-1}(l){\bar P}_{k+1}(l){P}^{-1}_{k}(l)\widetilde{{\Theta}}_{k}(l)\right)\nonumber\\
			&-2\operatorname{tr}\Big(\widetilde{\Theta}^{\top}_{k}(l){P}_{k}^{-1}(l) c_k(l){P}_{k}(l)\Phi_{k}(l)[W^{\top}_{k+1}+\epsilon^{\top}_{k}(l)]
			\Big)\nonumber\\
			&+\operatorname{tr}\Big([W_{k+1}+\epsilon_{k}(l)]\Phi^{\top}_{k}(l){P}_{k}(l)c_k(l)\nonumber\\
			& ~~~\cdot{\bar P}^{-1}_{k+1}(l)c_k(l){P}_{k}(l)\Phi_{k}(l)[W^{\top}_{k+1}+\epsilon^{\top}_{k}(l)]
			\Big).\label{v2}
		\end{align}
		Now, we proceed to compute the right-hand side (RHS) of
		(\ref{v2}) term by term. Firstly, we know that from (\ref{al1})
		\begin{align}	
			&\operatorname{tr}\left(\widetilde{\Theta}^{\top}_{k}(l){P}_{k}^{-1}(l)\bar{P}_{k+1}(l){P}^{-1}_{k}(l)\widetilde{{\Theta}}_{k}(l)\right)\nonumber\\
			=&\operatorname{tr}\Big(\widetilde{\Theta}^{\top}_{k}(l){P}_{k}^{-1}(l)\Big({P}_{k}(l)-c_k(l){P}_{k}(l)\Phi_{k}(l)\nonumber\\
			&~~\cdot\Phi^{\top}_{k}(l)P_{k}(l)\Big){P}^{-1}_{k}(l)\widetilde{{\Theta}}_{k}(l)\Big)\nonumber\\
			=&\operatorname{tr}(V_k(l))-\operatorname{tr}\left(\widetilde{\Theta}^{\top}_{k}(l)\Phi_{k}(l)b_k(l)\Phi^{\top}_{k}(l)\widetilde{{\Theta}}_{k}(l)
			\right),\label{v3}
		\end{align}
		where the last inequality holds since we have
		\begin{align}\label{27}
			\left\{
			\begin{aligned}	&c_k(l)P_{k}(l)=P_{k}(l)c_k(l),~\Phi^{\top}_{k}(l)c_k(l)=b_k(l)\Phi^{\top}_{k}(l),\\
				&	c_k(l)\Phi_{k}(l)=\Phi_{k}(l)b_k(l).\\
			\end{aligned}
			\right.
		\end{align}
		Then for the second term on the RHS of (\ref{v2}), we have  by (\ref{27})
		\begin{align}
			&\operatorname{tr}\Big(\widetilde{\Theta}^{\top}_{k}(l){P}_{k}^{-1}(l) c_k(l){P}_{k}(l)\Phi_{k}(l)[W^{\top}_{k+1}+\epsilon^{\top}_{k}(l)]
			\Big)\nonumber\\
			=&\operatorname{tr}\Big(\widetilde{\Theta}^{\top}_{k}(l) \Phi_{k}(l)b_k(l)[W^{\top}_{k+1}+\epsilon^{\top}_{k}(l)]
			\Big).\label{v4}   
		\end{align}
		By (\ref{bar}), we have 
		\begin{align}
			\bar P^{-1}_{k+1}(l)= P^{-1}_{k}(l)+\Phi_{k}(l)\Phi^{\top}_{k}(l).\label{bar2}
		\end{align}
		For the last term on the RHS of (\ref{v2}), by Lemma \ref{LemmCoro41} and (\ref{bar2}), we compute it as follows:
		\begin{align}
			&\operatorname{tr}\Big([W_{k+1}+\epsilon_{k}(l)]\Phi^{\top}_{k}(l){P}_{k}(l)c_k(l)\nonumber\\
			& ~~~\cdot {\bar P}^{-1}_{k+1}(l)c_k(l){P}_{k}(l)\Phi_{k}(l)[W^{\top}_{k+1}+\epsilon^{\top}_{k}(l)]
			\Big)\nonumber\\
			=&\operatorname{tr}\Big([W_{k+1}+\epsilon_{k}(l)]\Phi^{\top}_{k}(l){P}_{k}(l)c_k(l)[P^{-1}_{k}(l)+\Phi_{k}(l)\Phi^{\top}_{k}(l)]\nonumber\\
			& ~~~c_k(l){P}_{k}(l)\Phi_{k}(l)[W^{\top}_{k+1}+\epsilon^{\top}_{k}(l)]
			\Big)
			\nonumber\\
			=&\operatorname{tr}\Big([W_{k+1}+\epsilon_{k}(l)]b^2_k(l)\Phi^{\top}_{k}(l){P}_{k}(l)\Phi_{k}(l)[W^{\top}_{k+1}+\epsilon^{\top}_{k}(l)]
			\Big)	\nonumber\\
			&+\operatorname{tr}\Big([W_{k+1}+\epsilon_{k}(l)]b^2_k(l)\Phi^{\top}_{k}(l){P}_{k}(l)\Phi_{k}(l)\Phi^{\top}_{k}(l)\nonumber\\
			&~~~\cdot {P}_{k}(l)\Phi_{k}(l)[W^{\top}_{k+1}+\epsilon^{\top}_{k}(l)]
			\Big)\nonumber\\	=&\operatorname{tr}\Big([W_{k+1}+\epsilon_{k}(l)]b_k(l)\Phi^{\top}_{k}(l){P}_{k}(l)\Phi_{k}(l)[W^{\top}_{k+1}+\epsilon^{\top}_{k}(l)]
			\Big).\label{v5}
		\end{align}
		Thus by (\ref{v2}), (\ref{v3}), (\ref{v4}) and (\ref{v5}), we can obtain that
		\begin{align}
			&\operatorname{tr}(V_{k+1}(l))\nonumber\\
			\leq& \operatorname{tr}(V_k(l))-\operatorname{tr}\left(\widetilde{\Theta}^{\top}_{k}(l)\Phi_{k}(l)b_k(l)\Phi^{\top}_{k}(l)\widetilde{{\Theta}}_{k}(l)
			\right)\nonumber\\
			&-2\operatorname{tr}\Big(\widetilde{\Theta}^{\top}_{k}(l)\Phi_{k}(l)b_k(l)	[W^{\top}_{k+1}\!+\!\epsilon^{\top}_{k}(l)]\Big)\nonumber\\
			&+\operatorname{tr}\Big([W_{k+1}+\epsilon_{k}(l)]b_k(l)\Phi^{\top}_{k}(l){P}_{k}(l)\Phi_{k}(l)[W^{\top}_{k+1}+\epsilon^{\top}_{k}(l)]
			\Big).
		\end{align}
		Summing from $k=0$ to $t-1$ yields that 
		\begin{align}
			&\operatorname{tr}(V_{t}(l))+\sum^{t-1}_{k=0}\operatorname{tr}\left(\widetilde{\Theta}^{\top}_{k}(l)\Phi_{k}(l)b_k(l)\Phi^{\top}_{k}(l)\widetilde{{\Theta}}_{k}(l)
			\right)\nonumber\\
			\leq & \operatorname{tr}(V_0(l))\!-\!\!2\!\sum^{t-1}_{k=0}\operatorname{tr}\Big(\widetilde{\Theta}^{\top}_{k}(l)
			\Phi_{k}(l)b_k(l)[W^{\top}_{k+1}\!+\!\epsilon^{\top}_{k}(l)]\Big)\nonumber\\
			&+\sum^{t-1}_{k=0}\operatorname{tr}\Big([W_{k+1}+\epsilon_{k}(l)]b_k(l)\Phi^{\top}_{k}(l){P}_{k}(l)\nonumber\\
			&~~~~~~~~~~~\cdot\Phi_{k}(l)[W^{\top}_{k+1}+\epsilon^{\top}_{k}(l)]
			\Big).\label{v6}
		\end{align}

		By  the properties of the trace, we can  obtain that
		\begin{align}
			&\sum^{t-1}_{k=0}\operatorname{tr}\Big(\widetilde{\Theta}^{\top}_{k}(l)
			\Phi_{k}(l)b_k(l)W^{\top}_{k+1}\Big)\nonumber\\
			=&\sum^n_{i=1}\sum^{t-1}_{k=0}\operatorname{tr}\Big(\widetilde{\Theta}^{\top}_{k,i}(l)
			\varphi_{k,i}(l)b_{k,i}(l)w^{\top}_{k+1,i}\Big)\nonumber\\
			=&\sum^n_{i=1}\sum^{t-1}_{k=0}b_{k,i}(l)\varphi^{\top}_{k,i}(l)\widetilde{\Theta}_{k,i}(l)w_{k+1,i}.\label{new4}
		\end{align}
		
		By the fact $b_k(l)\leq I_{np_l}$ , it is clear that 
		\begin{align}
			&-\sum^{t-1}_{k=0}\operatorname{tr}\Big(\widetilde{\Theta}^{\top}_{k}(l)
			\Phi_{k}(l)b_k(l)\epsilon^{\top}_{k}\Big)\nonumber\\
			\leq& \frac{1}{4}\sum^{t-1}_{k=0}\operatorname{tr}\left(\widetilde{\Theta}^{\top}_{k}(l)\Phi_{k}(l)b_k(l)\Phi^{\top}_{k}(l)\widetilde{{\Theta}}_{k}(l)
			\right)\!+\!\sum^{t-1}_{k=0}\operatorname{tr}\left(\epsilon^{\top}_{k}\epsilon_{k}\right)\nonumber\\
			=& \frac{1}{4}\sum^{t-1}_{k=0}\operatorname{tr}\left(\widetilde{\Theta}^{\top}_{k}(l)\Phi_{k}(l)b_k(l)\Phi^{\top}_{k}(l)\widetilde{{\Theta}}_{k}(l)
			\right)\!\!+\!\!\sum^n_{i=1}\!\sum^{t-1}_{k=0}\|\epsilon_{k,i}(l)\|^2.\label{new5}
		\end{align}
		By the proof  of Lemma 4.4 in \cite{Xie2021} (see (33) in \cite{Xie2021}), we have 
		\begin{align}
			&\lambda_{\max}(b_k(l)\Phi^{\top}_{k}(l){P}_{k}(l)\Phi_{k}(l))\nonumber\\
			\leq&\frac{{\rm det}\left({P}_{k+1}^{-1}(l)\right)-{\rm det}\left({P}_{k}^{-1}(l)\right)}{{\rm det}\left({P}_{k+1}^{-1}(l)\right)}\leq 1.\label{new7}
		\end{align}
		Then it follows from (\ref{new7}) that
		\begin{align}
			&\sum^{t-1}_{k=0}\operatorname{tr}\Big([W_{k+1}\!+\!\epsilon_{k}(l)] 
			b_k(l)\Phi^{\top}_{k}(l){P}_{k}(l)\Phi_{k}(l)[W^{\top}_{k+1}\!+\!\epsilon^{\top}_{k}(l)]
			\Big)\nonumber\\
			\leq& 2\sum^{t-1}_{k=0}\operatorname{tr}\Big(W_{k+1}
			b_k(l)\Phi^{\top}_{k}(l){P}_{k}(l)\Phi_{k}(l)W^{\top}_{k+1}
			\Big)\nonumber\\
			&+ 2\sum^{t-1}_{k=0}\operatorname{tr}\Big(\epsilon_{k}(l)
			b_k(l)\Phi^{\top}_{k}(l){P}_{k}(l)\Phi_{k}(l)\epsilon^{\top}_{k}(l)
			\Big)\nonumber\\
			\leq& 2\sum^{t-1}_{k=0}\operatorname{tr}\Big(W_{k+1}
			b_k(l)\Phi^{\top}_{k}(l){P}_{k}(l)\Phi_{k}(l)W^{\top}_{k+1}
			\Big)\nonumber\\
			&+ 2\sum^{t-1}_{k=0}\lambda_{\max}(b_k(l)\Phi^{\top}_{k}(l){P}_{k}(l)\Phi_{k}(l))\operatorname{tr}\Big(\epsilon_{k}(l)
			\epsilon^{\top}_{k}(l)
			\Big)\nonumber\\
			\leq& 2\sum^{t-1}_{k=0}\operatorname{tr}\Big(W_{k+1}
			b_k(l)\Phi^{\top}_{k}(l){P}_{k}(l)\Phi_{k}(l)W^{\top}_{k+1}
			\Big)\nonumber\\
			&+ 2\sum^n_{i=1}\!\sum^{t-1}_{k=0}\|\epsilon_{k,i}(l)\|^2.\label{36}
		\end{align}
		Substituting (\ref{new4}), (\ref{new5}) and (\ref{36}) into (\ref{v6}), we complete the proof of this lemma.
	\end{proof}
	
	\section{Proof of Lemma \ref{th1}}\label{proof:th1}
	\begin{proof}
		From Lemma \ref{lemma3}, we  estimate the last two terms with $p_t$ on the RHS of (\ref{new6}) to prove this lemma.
		
		1) Estimate  $\sum^n_{i=1}\sum^{t-1}_{k=0}b_{k,i}(t)\varphi^{\top}_{k,i}(t)\widetilde{\Theta}_{k,i}(t)w_{k+1,i}$.
		
		By Lemma \ref{lemma2} and the fact $b_{k,i}(t)\leq 1$, we have
		\begin{align}
			&\left|\max_{1\leq p_l\leq p_t}\sum^n_{i=1}\sum^{t-1}_{k=0}b_{k,i}(l)\varphi^{\top}_{k,i}(l)\widetilde{\Theta}_{k,i}(l)w_{k+1,i}\right|\nonumber\\
			=&\sum^n_{i=1}O(h_{t,i}\log h_{t,i})+\sum^n_{i=1}o(h_{t,i}d_i(t)\log\log t)\nonumber\\
			=&\sum^n_{i=1}o(h^2_{t,i})+\sum^n_{i=1}o([d_i(t)\log\log t]^2)+O(1)\nonumber\\
			=&o\left(\sum^n_{i=1}h^2_{t,i}\right)+o([d(t)\log\log t]^2)+O(1),\label{35} ~~{\rm a.s.}
		\end{align}
		where 
		\begin{align}
			h_{t,i}=\left\{\max_{1\leq p_l\leq p_t}\sum^{t-1}_{k=0}b_{k,i}(l)\|\varphi^{\top}_{k,i}(l)\widetilde{\Theta}_{k,i}(l)\|^2\right\}^{\frac{1}{2}}.\label{hti}
		\end{align}

		2) Estimate $\sum^{t-1}_{k=0}\operatorname{tr}\Big(W_{k+1}
		b_k(t)\Phi^{\top}_{k}(t){P}_{k}(t)\Phi_{k}(t)W^{\top}_{k+1}
		\Big) $.
		
		By (\ref{new7}) and $P^{-1}_{0}(l)=\frac{1}{\beta}I_{np_l}\in\mathbb{R}^{np_l\times np_l}$, we have
		\begin{align}
			&\max_{1\leq p_l\leq p_t}\sum^{t-1}_{k=0}\lambda_{\max}(b_k(l)\Phi^{\top}_{k}(l){P}_{k}(l)\Phi_{k}(l))\nonumber\\
			\leq &\max_{1\leq p_l\leq p_t}\sum^{t-1}_{k=0}\frac{{\rm det}\left({P}_{k+1}^{-1}(l)\right)-{\rm det}\left({P}_{k}^{-1}(l)\right)}{{\rm det}\left({P}_{k+1}^{-1}(l)\right)}\nonumber\\
			=&\max_{1\leq p_l\leq p_t}\sum^{t-1}_{k=0}\int^{{\rm det}\left({P}_{k+1}^{-1}(l)\right)}_{{\rm det}\left({P}_{k}^{-1}(l)\right)}\frac{1}{{\rm det}\left({P}_{k+1}^{-1}(l)\right)}dx\nonumber\\
			\leq& \max_{1\leq p_l\leq p_t}\int^{{\rm det}\left({P}_{t}^{-1}(l)\right)}_{{\rm det}\left({P}_{0}^{-1}(l)\right)}\frac{1}{x}dx\nonumber\\
			=&\max_{1\leq p_l\leq p_t}\left(\log({\rm det}\left({P}_{t}^{-1}(l)\right))-\log({\rm det}\left({P}_{0}^{-1}(l)\right))\right)
			\nonumber\\
			=&\max_{1\leq p_l\leq p_t}\left(\log({\rm det}\left({P}_{t}^{-1}(l)\right))+np_l\log\beta\right)\nonumber\\
			=&O\left(p_t\log^+\{\lambda_{\max}({P}_{t}^{-1}(t))\}\right), ~~{\rm a.s.}\label{41}
		\end{align}
		Thus from Lemma \ref{lemma1} (ii) and Assumption \ref{a2} and (\ref{41}) it follows that
		\begin{align}
			&	\max_{1\leq p_l\leq p_t}\sum^{t-1}_{k=0}\operatorname{tr}\Big(W_{k+1}
			b_k(l)\Phi^{\top}_{k}(l){P}_{k}(l)\Phi_{k}(l)W^{\top}_{k+1}
			\Big)\nonumber\\	
			\leq &\max_{1\leq p_l\leq p_t}\sum^{t-1}_{k=0}\!\!\left(\lambda_{\max}(b_k(l)\Phi^{\top}_{k}(l){P}_{k}(l)\Phi_{k}(l))\!\sum^n_{i=1}\!\|w_{k+1,i}\|^2\right)\nonumber\\	
			\leq &\max_{1\leq p_l\leq p_t}\Bigg\{\sum^{t-1}_{k=0}\Big(\lambda_{\max}(b_k(l)\Phi^{\top}_{k}(l){P}_{k}(l)\Phi_{k}(l))\nonumber\\	
			&~~~~~~~~~~~~~~~~~\cdot\sum^n_{i=1}\!(\|w_{k+1,i}\|^2-\mathbb{E}[\|w_{k+1,i}\|^{2}|\mathscr{F}_k]) \Big)\nonumber\\	
			+&\sum^{t-1}_{k=0}\lambda_{\max}(b_k(l)\Phi^{\top}_{k}(l){P}_{k}(l)\Phi_{k}(l))\sum^n_{i=1}\mathbb{E}[\|w_{k+1,i}\|^{2}|\mathscr{F}_k]) 
			\Bigg\}\nonumber\\	
			=&O\left(\max_{1\leq p_l\leq p_t}\sum^{t-1}_{k=0}\lambda_{\max}(b_k(l)\Phi^{\top}_{k}(l){P}_{k}(l)\Phi_{k}(l))\right)\nonumber\\	
			&+o(d^2(t)\log\log t)\nonumber\\	
			=&O\left(p_t\log^+\{\lambda_{\max}({P}_{t}^{-1}(t))\}\right)
			\!+\!o(d^2(t)\log\log t)~{\rm a.s..}\label{42}
		\end{align}
		Combining (\ref{new6}), (\ref{35}) and (\ref{42}), and since 
		\begin{align}
			&\sum^{t-1}_{k=0}\operatorname{tr}\left(\widetilde{\Theta}^{\top}_{k}(l)\Phi_{k}(l)b_k(l)\Phi^{\top}_{k}(l)\widetilde{{\Theta}}_{k}(l)\right)\nonumber\\
			=&\sum^n_{i=1}\sum^{t-1}_{k=0}\operatorname{tr}\left(\widetilde{\Theta}^{\top}_{k,i}(l)\varphi_{k,i}(l)b_{k,i}(l)\varphi^{\top}_{k,i}(l)\widetilde{{\Theta}}_{k,i}(l)
			\right),\nonumber\\
			=&\sum^n_{i=1}\sum^{t-1}_{k=0}b_{k,i}(l)\|\varphi^{\top}_{k,i}(l)\widetilde{{\Theta}}_{k,i}(l)\|^2,
		\end{align}
		we have 
		\begin{align}
			&\max_{1\leq p_l\leq p_t}\operatorname{tr}(V_{t}(l))+\frac{1}{2}\sum^n_{i=1}h^2_{t,i}\nonumber\\
			= & o\left(\sum^n_{i=1}h^2_{t,i}\right)+o([d(t)\log\log t]^2)+O(1)	\nonumber\\
			&+O\left(p_t\log^+\{\lambda_{\max}({P}_{t+1}^{-1}(t))\}\right)
			\!+\!o(d^2(t)\log\log t)\nonumber\\
			&+4\sum^n_{i=1}\!\sum^{t-1}_{k=0}\|\epsilon_{k,i}(t)\|^2\nonumber\\
			=&O\left(\sum^n_{i=1}\!\sum^{t-1}_{k=0}\|\epsilon_{k,i}(t)\|^2\right)+ o\left(\sum^n_{i=1}h^2_{t,i}\right)\nonumber\\
			&+o([d(t)\log\log t]^2)+O\left(p_t\log^+\{\lambda_{\max}({P}_{t}^{-1}(t))\}\right)\nonumber ~{\rm a.s.}
		\end{align}
		with $h_{t,i}$ being defined in (\ref{hti}).
		This implies that
		\begin{align}
			&\operatorname{tr}(V_{t}(t))+\left(\frac{1}{2}+o(1)\right)\sum^n_{i=1}h^2_{t,i}\nonumber\\
			=&O\left(\sum^n_{i=1}\!\sum^{t-1}_{k=0}\|\epsilon_{k,i}(t)\|^2\right) +o([d(t)\log\log t]^2)\nonumber\\
			&+O\left(p_t\log^+\{\lambda_{\max}({P}_{t}^{-1}(t))\}\right),\nonumber ~{\rm a.s.,}
		\end{align}
		which concludes the result of Lemma \ref{th1}.
	\end{proof}
}

\bibliographystyle{IEEEtran}
\bibliography{my_bib}

\end{document}